\documentclass{article}

\usepackage{arxiv}
\renewcommand{\headeright}{}
\renewcommand{\undertitle}{}

\usepackage[T1]{fontenc}    
\usepackage{hyperref}       
\usepackage{url}            
\usepackage{booktabs}       
\usepackage{amsfonts}       
\usepackage{nicefrac}       
\usepackage{microtype}      
\usepackage{amsmath}
\usepackage{amssymb}
\usepackage{dsfont}
\usepackage{amsmath, amsthm}
\usepackage{graphicx}
\usepackage{upgreek}
\usepackage{datetime}
\usepackage{enumitem}
\theoremstyle{definition}
\newtheorem{definition}{Definition}[section]
\newtheorem{theorem}{Theorem}[section]
\newtheorem{lemma}[theorem]{Lemma}
\newtheorem{proposition}[theorem]{Proposition}

\newtheorem{remark}[theorem]{Remark}

\date{}
\title{Characterizing Von Neumann-Morgenstern Stable Sets in Infinite Sets}

\author{
  Athanasios Andrikopoulos\thanks{Professor  (https://www.ceid.upatras.gr/webpages/faculty/aandriko/)} \\
  Dept. of Computer Engineering and Informatics\\
  University of Patras\\
  Patras, 26504, Greece \\
  \texttt{aandriko@ceid.upatras.gr} \\
\And
Nikolaos Sampanis\\
 Dept. of Computer Engineering and Informatics\\
 University of Patras\\
 Patras, 26504, Greece \\
 \texttt{nsampanis@upatras.gr} \\
}

\begin{document}
\maketitle

\begin{abstract}
The theory of optimal choice sets provides a well-established framework in social choice and game theory. When preferences are cyclic, as often occurs in complex economic environments, the set of maximal elements may be empty, thereby motivating alternative solution concepts such as the von Neumann--Morgenstern (vNM) stable set. In this paper, we study binary relations on infinite sets of alternatives within an order-theoretic and topological framework. Our main result yields a topological characterization of von Neumann--Morgenstern stable maximality: for consistent abstract decision problems satisfying Upper MacNeille Informational Monotonicity, the set of maximal elements is non-empty and stable if and only if there exists a compact topology on \(X\) with respect to which \(R\) is Nachbin closed and upper semicontinuous.
\end{abstract}

\keywords{ Von Neumann-Morgenstern Stable Set \and Compactness\and Precontinuity\and Lawson Topology\and Nachbin Closedness\and 
Upper (Lower) Semicontinuity\and Maximal Chains\and Irreflexive Binary Relations}

\section{Introduction}
The classical rationality conditions in choice theory formalize the idea that rational choice consists in selecting an alternative for which no other feasible alternative is considered superior. Under this view, the choice set associated with a feasible set $X$ is the set of maximal elements with respect to a dominance relation $R$, commonly called the core. In many abstract decision problems $(X,R)$, especially those involving infinite sets of alternatives or cyclic preferences, the core may be empty. This difficulty has motivated the development of broader solution concepts that remain meaningful even when maximal elements fail to exist.

Among the best-known such concepts is the Schwartz set, which is always non-empty in finite abstract decision problems and contains the core whenever the latter is non-empty. However, the Schwartz set may be excessively large, sometimes including all available alternatives. By contrast, the von Neumann--Morgenstern (vNM) stable set provides a more selective solution concept based on the dual requirements of internal and external stability. Despite its conceptual appeal, the existence of vNM stable sets is not guaranteed, especially in the presence of cycles or on infinite domains.

While the theory of stable sets is well developed in finite settings, much less is known in general infinite environments. In such settings, purely order-theoretic assumptions are often insufficient to ensure existence or stability properties, and topological structure becomes indispensable. Compactness supports existence arguments, while suitable semicontinuity and closedness conditions prevent pathological behavior of the dominance relation.

The purpose of this paper is to study stable maximality for binary relations on infinite sets of alternatives within a combined order-theoretic and topological framework. Our analysis focuses on consistent abstract decision problems and on the interplay between maximal elements, induced order structures, and compact topologies. The main result establishes a topological characterization of von Neumann--Morgenstern stable set: the set of maximal elements is non-empty and stable if and only if there exists a compact topology on $X$ under which $R$ is Nachbin closed and upper semicontinuous.

To obtain this characterization, we first establish an existence result ensuring maximal strong components under compactness-type assumptions. We then study the poset induced by the asymmetric part of the relation and analyze conditions under which it is precontinuous. This allows us to use the MacNeille completion and the Lawson topology in order to derive the relevant compactness and closedness properties. We also identify an additional chain-based structural condition implying precontinuity, thereby clarifying the order-theoretic content of the topological framework developed in the paper.

\section{Notations and definitions}
An abstract decision problem is defined by two primary components. The first is an arbitrary set $X$ of alternatives, referred to as the ground set, from which an individual or group must select. In most meaningful decision contexts, $X$ contains at least two alternatives. The second component is a dominance relation $R$ over $X$, which models preferences or evaluations. We denote the relation as $xRy$ or $(x, y) \in R$ interchangeably. 
We denote by $\Omega(X)$ the set of abstract decision problems on $X$.
For any \( x \in X \), the sets
$Rx = \{ y \in X \mid yRx \}$ and $xR = \{ y \in X \mid xRy \}$
are called the \emph{upper contour set} and the \emph{lower contour set} of \( R \) at \( x \), respectively.
 The asymmetric part of $R$ is defined by 
$P(R)=\{ (x,y) \in X\times X \vert (x,y)\in R$ and $(y,x)\notin R \}$.
Several structural properties of the relation $R$ are relevant to our analysis. The diagonal relation is defined as $\Delta = \{(x, x) \mid x \in X\}$. A relation $R$ is:
{\it reflexive} if $(x, x) \in R$ for all $x \in X$;
{\it irreflexive} if $(x, x) \notin R$ for all $x \in X$;
{\it transitive} if $(x, z) \in R$ and $(z, y) \in R$ imply $(x, y) \in R$;
{\it antisymmetric} if $(x, y) \in R$ and $(y, x) \in R$ imply $x = y$.
A {\it partially ordered set (poset)} is a pair $(X, \preceq)=\mathcal{P}$, where $\preceq$ is a reflexive, transitive, and antisymmetric binary relation. Posets provide the necessary structure to analyze hierarchical preferences in infinite domains where total orderings may not exist.
The {\it transitive closure} of $R$ is the relation $\overline{R}$ defined as follows:
For all $x, y\in X$, $(x,y)\in \overline{R}$ if and only if there exist $K\in \mathbb{N}$ and $x_{_0},...,x_{_K}\in X$ such that
$x=x_{_0}, (x_{_{k-1}},x_{_k})\in R$ for all $k\in \{1,...,K\}$ and $x_{_K}=y$.
A subset $Y\subseteq X$ is an $R$-{\it cycle} if for all $x,y\in Y$, we have 
$(x,y)\in \overline{R}$ and $(y,x)\in \overline{R}$.
We say that $R$ is {\it acyclic} if there does not exist an $R$-cycle. Likewise, \(P(R)\) is {\it acyclic} if there does not exist a \(P(R)\)-cycle. In infinite spaces, acyclicity alone is insufficient to guarantee non-empty choice sets, necessitating topological constraints.
A {\it Top $R$-cycle} is an $R$-cycle 
which is maximal with respect to set-inclusion.
A binary relation $R$ is \emph{consistent}, if for all $x,y \in X$, for all $k \in \mathbb{N}$, and for all
$x_0, x_1, \ldots, x_K \in X$, if $x = x_0$, $(x_{k-1}, x_k) \in R$ for all $k \in \{1,\ldots,K\}$ and $x_K = y$,
then $(y,x) \notin P(R)$.
If $R$ is consistent, then $P(R)$ is an acyclic binary relation. The
notion of consistency is weaker than the notions of transitivity and acyclicity.

A subset $D\subseteq X$ is $R$-{\it undominated} if and only if for no $x\in D$ is there a $y\in X\setminus D$ such that $yRx$. 
An alternative $x\in X$ is $R$-{\it maximal} with respect to a binary relation $R$,
if $(y,x)\in P(R)$ for no $y\in X$. $\mathcal{M}(X,R)$ denotes the elements of $X$ that are $R$-maximal in $X$, hence,
\[
\mathcal{M}(X,R)=\{x\in X\mid \nexists y\in X \text{ such that } yP(R)x\}.
\]
In what follows, $\mathfrak{P}(X)$ denotes the family of non-empty subsets of $X$.
A choice function $\mathcal{C}$ is a mapping that assigns to each $A\in \mathfrak{P}(X)$ a subset of $A$: 
$\mathcal{C}\rightarrow 2^X$ such that for all $A\in \mathfrak{P}(X)$, $\mathcal{C}(A)\subseteq A$.
The traditional choice-theoretic approach takes behavior as rational if there is a binary relation $R$ such that for 
each non-empty subset $A$ of $X$, $\mathcal{C}(A)=\mathcal{M}(A,R)$ ($\mathcal{M}(A,R)$ denotes the elements $X$ that are $R$-maximal in $A$). To deal with the case where the set 
of maximal elements is empty, Schwartz in \cite[p. 142]{sch} has proposed the general solution concept known as
{\it Generalized Optimal-Choice Axiom} ($\mathcal{G}\mathcal{O}\mathcal{C}\mathcal{H}\mathcal{A}$): 
For each $A\subseteq X$, $\mathcal{C}(A)$ is equivalent to the union of all minimal 
$R$-undominated subsets of $A$.
From now on we will denote the union 
of all minimal 
$R$-undominated subsets of an abstract decision problem $(X,R)$ by $\mathcal{S}_{ch}(X,R)$
and call it the {\it Schwartz set}.
Deb in \cite{deb} shows that $\mathcal{S}_{ch}(X,R)=\mathcal{M}(X,\overline{P(R)})$ (see also \cite{and3}).
According to the generalization of Deb's Theorem in \cite{and3} and 
\cite[Theorem 19]{and4},  $\mathcal{S}_{ch}$ is equivalent to the union of all $P(R)$-undominated 
elements and all top $P(R)$-cycles in $X$. It
 is also equivalent to the notion of admissible set in game theory defined by Kalai and Schmeidler in \cite{KS} and 
 the notion of dynamic solutions defined by Shenoy in \cite{she}.

A subset $V \subseteq X$ is a \textit{von Neumann--Morgenstern stable set} if it satisfies:
\begin{enumerate}[label=(\roman*)]
    \item \textit{internal stability:} for all $x, y \in V$, $(x, y) \notin P(R)$;
    \item \textit{external stability:} for all $y \in X \setminus V$, there exists $x \in V$ such that $(x, y) \in P(R)$.
\end{enumerate}
Stable sets extend the notion of maximality by providing core-inclusive equilibria even when $\mathcal{M}(X,R)$ is empty, a critical requirement for social choice in complex, non-finite alternative spaces.

An abstract decision problem $(X,R)$ is called {\it strongly connected} if $x\overline{P(R)}y$ for all $x, y\in X$. 
A {\it strong component} of an abstract decision problem $(X,R)$
is an abstract decision problem $(Y,R|_{_Y})$, $Y\subseteq X$, satisfying the following properties:
($\mathfrak{i}$) $(Y,R|_{_Y})$ is strongly connected;
($\mathfrak{i}\mathfrak{i}$) no abstract decision problem $(Y^{\prime},R|_{_{Y^{\prime}}})$ with $Y^{\prime}\supset Y$ 
is strongly connected.
Note that when an element $x$ is not on any $P(R)$-cycle, it forms a singleton strongly connected component $\{x\}$ by itself.
Clearly, the set of strongly connected components forms a partition of the space $(X,R)$.
The
{\it contraction} of $(X, R)$ is an abstract decision problem $(\Xi,\widetilde{R})$ where
\par
1. $\Xi=\{X_{_i}\vert i\in I\}$ is the collection of ground sets of the strong components of $(X,R)$;
\par
2. for any $X_{_i}, X_{_j}\in \Xi$, $X_{_i} \widetilde{R} X_{_j}$ if there are $x\in X_{_i}, y\in X_{_j}$ with 
$xP(R)y$. Clearly, $\widetilde{R}$ is acyclic by definition.

In what follows, $\mu(\Xi,\widetilde{R})=\{X^{\ast}_i\vert i\in I\}$ denotes the family of ground sets which 
are $\widetilde{R}$-maximal in $\Xi$.

Let $(X, \preceq)$ be a partially ordered set.
An element $a \in X$ is a {\it lower bound} if $a\preceq x$ for all $x \in X$, and $b \in X$ is an {\it upper bound} if $y\preceq b$ for all $y \in X$. The {\it supremum (join)} of $X$, denoted $\bigvee X$, is the element $z \in X$ such that:
(i) $z^*$ is an upper bound of $X$, and
(ii) for every upper bound $z$ of $X$, $z \preceq z^*$.
The {\it infimum (meet)}, denoted $\bigwedge X$, is defined dually. A subset $D \subseteq X$ is {\it directed} if it is non-empty and every finite subset of $D$ has an upper bound in $D$. 
A poset $(X, \preceq)$ is a {\it lattice} if every pair $\{x, y\}$ has a join ($x \vee y$) and a meet ($x \wedge y$), and it is a {\it complete lattice} if joins and meets exist for arbitrary subsets. These concepts are vital in infinite-domain characterizations, as directed sets and complete lattices facilitate the termination of transfinite processes and the existence of suprema.

Given the binary relation $\preceq$ on $X$,
for each $y \in X$ we denote by
\[
\uparrow y = \{ x \in X \mid y \preceq x \}
\]
the upper set of $y$ with respect to $\preceq$.
Let $A$ be a subset of $\mathcal{P}$. 
Then, 
$A^\uparrow$ and $A^\downarrow$ denote the sets of all upper and lower bounds of $A$, respectively. Let
\begin{center}
$A^\delta = (A^\uparrow)^\downarrow \quad \text{and} \quad \delta(P) = \{ A^\delta \vert A \subseteq P \}.$
\end{center}
$(\delta(P), \subseteq)$ is called the {\it normal completion}, or the {\it Dedekind–MacNeille completion} of $P$. 

 Let $x, y$ be elements of an ordered set $(X, \preceq)$. We say that $x \ll y$ (read: \emph{way-below relation}) if for every directed set $D \subseteq X$ such that $\bigvee D \preceq y$, there exists $d \in D$ with $d \succeq x$.

A subset $U \subseteq X$ is called \emph{Scott-open} if:
\begin{itemize}
  \item[(i)] $U$ is an \emph{upper} (respectively, \emph{lower}) set, i.e., if $x \in U$ and $y \succeq x$ (resp.\ $x \preceq y$), then $y \in U$;
  \item[(ii)] For every directed set $D \subseteq X$, if $\bigvee D \in U$, then $D \cap U \neq \emptyset$. This condition is called \emph{inaccessible by directed joins}.
\end{itemize}

The collection of all Scott-open sets forms the \emph{Scott topology}, denoted by $\sigma$. 
The \emph{lower topology} $\omega$ on an ordered set $(X, \preceq)$ is generated by the sets of the form $X \setminus \uparrow x = \{y \in X : y \not\succeq x\}$ for any $x \in X$.
The \emph{Lawson topology} on an ordered set is defined as the supremum of the Scott topology $\sigma$ and the lower topology $\omega$:
\[
\lambda = \sigma \vee \omega.
\]
This topology is symbolized by $\tau_{_L}$.

We say that a topological space $(X,\tau)$ is {\it compact} if for 
each collection of open sets which covers $X$ there exists a finite subcollection that also covers $X$.

\section{Characterization of the von Neumann - Morgenstern stable set} \vspace{-0.2cm}

Given an abstract system $(X, R)$, define the poset 
\[
\mathcal{P}_{_R} = (X, \preceq_{_R}),
\]
where $\preceq_{_R} = P(\overline{P(R)}) \cup \Delta$, with $P(R)$ denoting the asymmetric part of $R$ and $\Delta$ the diagonal relation on $X$. In particular, if $R$ is consistent, then \(\overline{P(R)}=P(\overline{P(R)})\) and therefore \(\preceq_{_R}=\overline{P(R)}\cup\Delta\). If \(R\) is acyclic, then \(P(R)=R\), so \(\preceq_{_R}=\overline{R}\cup\Delta\).

We will start with a result that will form the basis for some of the results that follow.

\begin{definition}
Let \( (X, \tau) \) be a topological space and \( R \subseteq X \times X \) 
a binary relation.
We say that $R$ is upper semicontinuous (resp. lower semicontinuous) if for every \(y\in X\), the set 
\begin{center}
\(A_y=\{x\in X\vert yP(R)x\}\) (resp. \(B_y=\{x\in X\vert xP(R)y\}\))
\end{center}
is \(\tau\)-open. 
\end{definition}

\par
Suzumura consistency is a weakened form of transitivity for binary relations that preserves an essential notion of rational coherence. A binary relation is Suzumura consistent whenever there is no finite chain of comparisons leading from an alternative $x$ to an alternative $y$ while, at the same time, $y$ is strictly preferred to $x$. Thus, the condition rules out cycles that create a conflict between indirect reachability and strict preference. Since transitivity implies Suzumura consistency, and Suzumura consistency implies acyclicity, the notion lies strictly between these two classical rationality requirements. This intermediate position makes it particularly useful in social choice theory, where collective preference relations often fail to be transitive but may still satisfy weaker consistency properties sufficient for normative and analytical purposes.

\begin{definition}
An {\it abstract decision problem} is a pair $(X,R)$,
where $X$ is a nonempty set and $R$ is a binary relation on $X$.
An abstract system $(X,R)$ is said to be a {\it consistent abstract system} whenever $R$ is consistent.
\end{definition}

An abstract decision problem is a pair $(X,R)$,
where $X$ is a nonempty set and $R$ is a binary relation on $X$.

\begin{proposition}\label{prop:consistent-order}
Let \((X,R)\) be a consistent abstract system. Then the transitive closure
\(\overline{P(R)}\) is asymmetric. Consequently,
\[
\preceq_{_R}=P(\overline{P(R)})\cup \Delta=\overline{P(R)}\cup\Delta.
\]
In particular, for all \(x,y\in X\),
\[
yP(R)x \;\Longrightarrow\; x\not\preceq_{_R} y.
\]
\end{proposition}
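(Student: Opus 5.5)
Let $Q=\overline{P(R)}$. The plan is to show first that $Q$ is asymmetric, read off the description of $\preceq_{_R}$ from that, and then derive the implication about $yP(R)x$.

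\textbf{Asymmetry of $Q$.} Suppose, towards a contradiction, that $xQy$ and $yQx$ for some $x,y\in X$. The key is to convert these two chains into a single chain in $R$ followed by one strict step.
\begin{enumerate}
\item Since $yQx$, there is a chain $y=z_0,z_1,\ldots,z_m=x$ with $m\ge 1$ and $z_{k-1}P(R)z_k$ for every $k$. Isolate its first link $z_0P(R)z_1$, that is, $yP(R)z_1$.
\item Concatenate the chain $x=x_0,\ldots,x_K=y$ witnessing $xQy$ with the remaining links $z_1,\ldots,z_m=x$. Every link is a $P(R)$-link and $P(R)\subseteq R$, so this is an $R$-chain from $z_1$ to $z_1$, passing $z_1\to\cdots\to x\to\cdots\to y$. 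Truncating at $y$ gives an $R$-chain from $z_1$ to $y$.
\item The chain $z_1\to\cdots\to x\to\cdots\to y$ together with $yP(R)z_1$ violates the definition of consistency, applied with starting point $z_1$ and endpoint $y$. This is the required contradiction.
\end{enumerate}
The degenerate case $x=y$ is covered by the same argument: a $P(R)$-cycle from $x$ back to $x$ gives an $R$-chain from $z_1$ to $x$ together with $xP(R)z_1$, which consistency forbids. In particular $Q$ is irreflexive.

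\textbf{Description of $\preceq_{_R}$.} Because $Q$ is asymmetric, $P(Q)=\{(a,b)\in Q\mid (b,a)\notin Q\}=Q$. Therefore $\preceq_{_R}=P(Q)\cup\Delta=Q\cup\Delta$.

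\textbf{The implication.} Assume $yP(R)x$. Then $yQx$, and since $y\neq x$ (irreflexivity of $Q$), we get $y\prec_{_R}x$. Suppose $x\preceq_{_R}y$.
\begin{itemize}
\item Since $x\neq y$, this would mean $xQy$.
\item Together with $yQx$, that contradicts the asymmetry of $Q$.
\end{itemize}
Here the convention $x\preceq_{_R}y$ means $(x,y)\in\preceq_{_R}$, with $(x,y)$ read in the same orientation as $Q$. The only care needed is to keep this orientation consistent, since the argument is purely symmetric under flipping it.

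The main obstacle is the bookkeeping in the asymmetry step: one must turn the two $Q$-chains into the exact shape the consistency definition requires, namely an $R$-chain from some $a$ to some $b$ plus $bP(R)a$. Splitting off the first strict link of the second chain, as in steps 1 and 2, is what achieves this.
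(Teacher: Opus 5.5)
Your proof is correct and follows the same route as the paper: you derive asymmetry of $\overline{P(R)}$ from consistency, conclude $P(\overline{P(R)})=\overline{P(R)}$, and then read off that $yP(R)x$ excludes $x\preceq_{_R}y$. Your version is more careful than the paper's in two places. First, you split off the first $P(R)$-link so the cycle takes the exact shape the consistency definition requires; your step 2 concatenation already runs from $z_1$ to $y$, so the ``chain from $z_1$ to $z_1$'' and the truncation are unnecessary. Second, you treat the case $x=y$, which the paper's argument (stated only for distinct $x,y$) omits, even though irreflexivity is needed for $P(\overline{P(R)})=\overline{P(R)}$.
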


\begin{proof}
Suppose, to the contrary, that \(\overline{P(R)}\) is not asymmetric. Then there exist
distinct \(x,y\in X\) such that
\[
(x,y)\in \overline{P(R)}
\qquad\text{and}\qquad
(y,x)\in \overline{P(R)}.
\]
Hence there exists a finite \(P(R)\)-cycle, contradicting the consistency of \(R\).
Therefore \(\overline{P(R)}\) is asymmetric, and so
\[
P(\overline{P(R)})=\overline{P(R)}.
\]
Thus
\[
\preceq_{_R}=P(\overline{P(R)})\cup\Delta=\overline{P(R)}\cup\Delta.
\]

Now let \(yP(R)x\). If \(x\preceq_{_R}y\), then since \(x\neq y\) we would have
\[
(x,y)\in \overline{P(R)},
\]
which contradicts consistency in the presence of the strict pair \(yP(R)x\). Therefore
\[
x\not\preceq_{_R} y.
\]
\end{proof}

The following lemma, which follows directly from the definitions of the underlying concepts, Van Deemen’s contraction theorem (see \cite[Theorem 4.7]{van1}), and the results of Andrikopoulos \cite[Theorem 3.1]{and} and \cite{and3}, provides a foundation for the results that follow.

\begin{lemma} \label{panm-early}
Let $(X, R) \in \Omega(X)$ and $ \mu(\Xi,\widetilde{R})= \{X^*_i\vert i\in I\}$.
Then,
\begin{center}
$V = \mathcal{S}_{ch}(X,R) \quad \text{if and only if} \quad 
V = \displaystyle\bigcup_{X^*_i \in \mu(\Xi,\widetilde{R})} X^*_i.$
\end{center}
\end{lemma}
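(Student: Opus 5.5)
The lemma asserts that $\mathcal{S}_{ch}(X,R)$ equals the union of the ground sets of the $\widetilde{R}$-maximal strong components. I will prove the set equality
\[
\mathcal{S}_{ch}(X,R)=\bigcup_{X^*_i\in\mu(\Xi,\widetilde{R})}X^*_i ,
\]
from which the stated biconditional for $V$ is immediate. The tool is Deb's characterization $\mathcal{S}_{ch}(X,R)=\mathcal{M}(X,\overline{P(R)})$, recalled in the preliminaries. So it suffices to show that an element $x$ is $\overline{P(R)}$-maximal exactly when the strong component $X_x$ containing $x$ is $\widetilde{R}$-maximal in $\Xi$.

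\textbf{Inclusion $\subseteq$.} Let $x\in\mathcal{M}(X,\overline{P(R)})$ and suppose some component $X_j\neq X_x$ satisfies $X_j\widetilde{R}X_x$. Then there are $u\in X_j$ and $v\in X_x$ with $uP(R)v$.

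Strong connectedness of $X_x$ gives $v\overline{P(R)}x$ whenever $v\neq x$; if $v=x$ the step is trivial. Hence $u\overline{P(R)}x$.

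Since $x$ is maximal for the strict part of $\overline{P(R)}$, we must have $x\overline{P(R)}u$. Then $u$ and $x$ lie on a common $P(R)$-cycle, so $u\in X_x$. This contradicts $X_j\neq X_x$.

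\textbf{Inclusion $\supseteq$.} Let $X_x$ be $\widetilde{R}$-maximal and suppose $y\overline{P(R)}x$. Take a chain $y=x_0P(R)x_1\cdots P(R)x_K=x$.

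Let $k$ be the least index with $x_k\in X_x$. If $k>0$, then $x_{k-1}\notin X_x$ and $x_{k-1}P(R)x_k$, so $X_{x_{k-1}}\widetilde{R}X_x$, contradicting maximality. Hence $y\in X_x$.

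If $y\neq x$, strong connectedness of $X_x$ gives $x\overline{P(R)}y$. So $yP(\overline{P(R)})x$ fails, and $x$ is maximal.

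\textbf{Main obstacle.} The delicate point is definitional: what counts as $\overline{P(R)}$-maximal, and how singleton components are treated.
\begin{itemize}
\item I read $\mathcal{M}(X,\overline{P(R)})$ as requiring no $y$ with $y\,P(\overline{P(R)})\,x$. This is the form consistent with Deb's theorem, where top cycles are included.
\item A singleton component $\{x\}$ not lying on any cycle is strongly connected only vacuously. The chain argument above must not use $x\overline{P(R)}x$ for it.
\item $\widetilde{R}$ must be read as relating distinct components; otherwise $X_i\widetilde{R}X_i$ would trivially destroy maximality of any nontrivial cycle.
\end{itemize}
With these conventions fixed, both inclusions follow as sketched, matching van Deemen's contraction theorem.
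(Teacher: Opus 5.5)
Your argument is correct, but it is not how the paper proceeds. The paper gives no proof of its own. It only remarks that the lemma follows from the definitions, van Deemen's contraction theorem, and Andrikopoulos's extensions, so it imports the finite-domain statement and its infinite generalization as a black box. You instead derive the set equality from Deb's identity $\mathcal{S}_{ch}(X,R)=\mathcal{M}(X,\overline{P(R)})$, by showing directly that $x$ is $\overline{P(R)}$-maximal exactly when its strong component is $\widetilde{R}$-maximal. Both inclusions use only finite $P(R)$-paths and the fact that strong components are the mutual-reachability classes of $\overline{P(R)}$. So your argument works verbatim for infinite $X$, which is the part the paper delegates to the literature, and it makes visible which conventions the lemma actually needs.

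Two remarks. First, when you say that $X_{x_{k-1}}\widetilde{R}X_x$ contradicts maximality, you are using that $\widetilde{R}$ is asymmetric between distinct components, since the paper defines maximality through $P(\widetilde{R})$. This follows from the same mutual-reachability argument. For the same reason, a loop $X_i\widetilde{R}X_i$ can never lie in $P(\widetilde{R})$, so your third convention is harmless but not actually required.

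Second, and more substantively, your proof rests entirely on Deb's identity, which holds for arbitrary $X$ only when ``undominated'' is read with respect to $P(R)$. Under the paper's literal $R$-version it fails. Take $X=\{a,b,c\}$ with $aRb$, $bRa$, $cRa$: the only minimal undominated set is $\{c\}$, while $\mathcal{M}(X,\overline{P(R)})=\{b,c\}$. By invoking the identity as recorded in the preliminaries, you are implicitly adopting the $P(R)$ reading, which is the reading under which the lemma is true.
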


\begin{lemma}\label{a221-early}{\rm Let $(X,R)$ be a consistent abstract system, and let $\tau$ be a compact topology in $X$.
Suppose that 
$R$ is upper semicontinuous on $(X,\tau)$.
Then, the family $\mu(\Xi,\widetilde{R})$ of ground sets, which 
are $\widetilde{R}$-maximal in $\Xi$, is non-empty.
}
\end{lemma}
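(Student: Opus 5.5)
The plan is to reduce the statement to the existence of a $P(R)$-maximal element and then get that element from compactness through the standard open-cover argument.

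\emph{Step 1: strong components are singletons.} By Proposition~\ref{prop:consistent-order}, consistency makes $\overline{P(R)}$ asymmetric. Suppose a strong component $(Y,R|_{_Y})$ contained two distinct points $x,y$. Strong connectedness would give $x\overline{P(R)}y$ and $y\overline{P(R)}x$, which is impossible. So every element of $\Xi$ is a singleton $\{x\}$, and $\{x\}\widetilde{R}\{y\}$ holds exactly when $xP(R)y$. Consequently $\{x\}\in\mu(\Xi,\widetilde{R})$ if and only if $x\in\mathcal{M}(X,R)$. It therefore suffices to show $\mathcal{M}(X,R)\neq\emptyset$.

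\emph{Step 2: an open cover.} Suppose, toward a contradiction, that $\mathcal{M}(X,R)=\emptyset$. Then every $x\in X$ has some $y$ with $yP(R)x$, that is, $x\in A_y=\{z\in X\mid yP(R)z\}$. By upper semicontinuity each $A_y$ is $\tau$-open, so $\{A_y\mid y\in X\}$ is an open cover of $X$. Compactness of $\tau$ gives finitely many points $y_1,\dots,y_n$ with $X=A_{y_1}\cup\dots\cup A_{y_n}$.

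\emph{Step 3: extracting a cycle.} Start from $y_{i_0}=y_1$. It lies in some $A_{y_{i_1}}$, so $y_{i_1}P(R)y_{i_0}$. Then $y_{i_1}$ lies in some $A_{y_{i_2}}$, and continuing produces an infinite backward $P(R)$-chain inside the finite set $\{y_1,\dots,y_n\}$. By pigeonhole some index repeats, $i_k=i_m$ with $k<m$. This gives a finite $P(R)$-cycle $y_{i_k}\overline{P(R)}y_{i_k}$. If $m=k+1$, it even gives $y_{i_k}P(R)y_{i_k}$, which asymmetry of $P(R)$ already forbids. In all cases the cycle contradicts the acyclicity of $P(R)$ implied by consistency (equivalently, the asymmetry of $\overline{P(R)}$ from Proposition~\ref{prop:consistent-order}). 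Hence some $x^*\in\mathcal{M}(X,R)$ exists, and $\{x^*\}\in\mu(\Xi,\widetilde{R})$ by Step~1.

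The only delicate point is Step~1. One must check that the paper's definition of strong component, via $\overline{P(R)}$, really collapses to singletons under consistency, so that $\widetilde{R}$-maximality of $\{x\}$ coincides with $R$-maximality of $x$. This follows from the definition of $\widetilde{R}$: any $\{z\}\widetilde{R}\{x^*\}$ would require $zP(R)x^*$. Steps~2 and~3 are routine once upper semicontinuity is read as openness of the strict lower sections $A_y$.
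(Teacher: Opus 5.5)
Your proof is correct. It uses the same compactness mechanism as the paper: cover by the open sets $A_y$, pass to a finite subcover, and chase dominators inside the finite index set until one repeats, producing a $P(R)$-cycle. The difference is in the setup. You first observe that consistency forces every strong component to be a singleton. Hence $\mu(\Xi,\widetilde{R})=\{\{x\}: x\in\mathcal{M}(X,R)\}$, and the lemma becomes the classical Bergstrom--Walker maximal-element theorem, which you prove by contradiction on all of $X$.

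The paper never makes this reduction. It fixes a non-maximal $x$ and localizes to $A_x=\{y:\emptyset\subset\overline{P(R)}y\subseteq\overline{P(R)}x\}$. It argues that $A_x$ is closed, hence compact, and runs the cover argument there. It then uses Zorn's lemma to enlarge the resulting $P(R)$-cycle to a top cycle and checks that this cycle is $\widetilde{R}$-maximal. That machinery is built for relations with genuine $P(R)$-cycles. Under consistency the cycle is already a contradiction, so the Zorn step does no real work.

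The localization also needs more care than your version:
\begin{itemize}
\item A limit of points of $A_x$ can have empty upper contour set, i.e.\ be a maximal element, so $A_x$ need not be closed as defined.
\item The witness $y_t$ lies in $A_x$ only when it is not itself maximal.
\end{itemize}
Both cases are harmless for the conclusion, but they must be split off explicitly. Your global argument needs no auxiliary set at all.

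Your Step~1 also supplies exactly the implication $\mu(\Xi,\widetilde{R})\neq\emptyset\Rightarrow\mathcal{M}(X,R)\neq\emptyset$ that the main theorem uses silently when it invokes this lemma. One convention should be flagged: Step~1 relies on the paper's remark that a point on no $P(R)$-cycle forms its own singleton component. Taken literally, the clause ``$x\overline{P(R)}y$ for all $x,y$'' would require $x\overline{P(R)}x$, which consistency forbids.
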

\begin{proof} Let $x\in X$.
If $x$ is an $\overline{P(R)}$-maximal element, then $\{x\}$ belongs to a top $P(R)$-cycle (Schwartz set).
Hence, $\{x\}\in \mu(\Xi,\widetilde{R})$.
Otherwise, there exists $y\in X$ such that $y\overline{P(R)}x$.
Similarly, if $y$ is an $\overline{P(R)}$-maximal element, then $\{y\}\in \mu(\Xi,\widetilde{R})$.
Otherwise, there exists $y_{_1}\in X$ such that $y_{_1}\overline{P(R)}y\overline{P(R)}x$.
Put
\begin{center}
$A_{_{x}}=\{y\in X\vert \emptyset\subset\overline{P(R)}y\subseteq \overline{P(R)}x\}$.
\end{center}
Since $y_1\overline{P(R)}y\overline{P(R)}x$
we conclude that $A_{_{x}}\neq \emptyset$.

We now show that $A_{_{x}}$ is closed with respect to $\tau$. 
Suppose that $t$ belongs to the closure of
$ A_{_{x}}$. Then, there exists a net $(t_{_k})_{_{k\in K}}$ in
$A_{_{x}}$ with $t_{_k}\to t$. We have to show that $t\in A_{_{x}}$, i.e., $\overline{P(R)}t\subseteq \overline{P(R)}x$.
Take any $z\in\overline{P(R)}t$. Then, there exist $m\in\mathbb{N}$ and
$z_{_0},z_{_1},\ldots,z_{_m}\in X$ such that
\[
z=z_{_0},\quad z_{_{i-1}}P(R)z_{_i}\ \text{for all } i\in\{1,\ldots,m\},
\quad \text{and} \quad z_{_m}=t.
\]
Since $z_{_{m-1}}P(R)t$, we have $t\in A_{z_{_{m-1}}}$.
By upper semicontinuity, $A_{z_{_{m-1}}}$ is open; hence, there exists
$k_{_0}\in K$ such that $t_{_{k_{_0}}}\in A_{z_{_{m-1}}}$.
Therefore, $z_{_{m-1}}P(R)t_{_{k_{_0}}}$, and consequently
$z\overline{P(R)}t_{_{k_{_0}}}$.
Hence, $z\in \overline{P(R)}t_{_{k_{_0}}}\subseteq \overline{P(R)}x$ ($t_{_{k_{_0}}}\in  A_{_{x}}$).
It follows that $\overline{P(R)}t\subseteq \overline{P(R)}x$ which implies that $t\in A_{_{x}}$.
Therefore, $A_{_{x}}$ is a closed subset of $X$.
If there exists $t^{\ast}\in A_{_{x}}$ which is $\overline{P(R)}$-maximal in $X$, then $t^{\ast}$ belongs to a top $P(R)$-cycle and thus
$\mu(\Xi,\widetilde{R})$ is non-empty. Otherwise,
for each $t\in A_{_{x}}$ there exists $y\in X$ such that
$(y,t)\in \overline{P(R)}$. Hence, there exist $n\in\mathbb{N}$ and
$y_{_0},y_{_1},\ldots,y_{_n}\in X$ such that
\[
y=y_{_0},\quad y_{_{i-1}}P(R)y_{_i}\ \text{for all } i\in\{1,\ldots,n\},
\quad \text{and} \quad y_{_n}=t.
\]
Put $y_t:=y_{_{n-1}}$. Then, $y_tP(R)t$, and by transitivity
$y_t\in A_{_x}$, since $\overline{P(R)}y_t\subseteq \overline{P(R)}x$.
Now let
\[
U_{y_t}:=A_{y_t}=\{z\in X\mid y_tP(R)z\}.
\]
By upper semicontinuity, $U_{y_t}$ is open; moreover, $t\in U_{y_t}$ because
$y_tP(R)t$. Since $R$ is consistent, we also have $y_t\notin U_{y_t}$.
Therefore, for each $t\in A_{_{x}}$,
the sets $U_{y_{_t}}\bigcap A_{_{x}}$
are open neighbourhoods of $t$ in the
relative topology of $A_{_{x}}$.

Hence,
\begin{center}
$A_{_{x}}=\displaystyle
\bigcup_{t\in X} (U_{y_{_t}}\bigcap A_{_{x}})$.
\end{center}
Since the space $A_{_{x}}$ is compact 
in the
relative topology, there exist $\{y_{_{t_{_1}}}, \ldots, y_{_{t_{_n}}}\}$ such that
\[
A_{_{x}} = \bigcup_{i \in \{1, \ldots, n\}} (U_{y_{_{t_i}}}\bigcap A_{_{x}}).
\]
Consider the finite set $\{y_{_{t_{_1}}}, \ldots, y_{_{t_{_n}}}\}$. 
Without loss of generality, we may assume that $i \neq j$ for all distinct $i, j \in \{1, \ldots, n\}$.
Then, for each 
$t \in A_{_{x}}=\bigcup\{U_{y_{_{t_i}}} \mid i = 1, \ldots, n\}$, 
there exists $i \in \{1, \ldots, n\}$ such that $y_{_{t_{_i}}}\succ_{_R} t$. 
Since $y_{_{t_{_1}}} \in A_{_{x}}$, it follows that 
$y_{_{t_{_i}}}\succ_{_R}y_{_{t_{_1}}}$ for some $i \in \{1, \ldots, n\}$. 
If $i=1$, then we have a contradiction. Otherwise, call this element 
$y_{_{t_{_2}}}$.
Then, we have $y_{_{t_{_2}}}\succ_{_R} y_{_{t_{_1}}}$.
Similarly, $y_{_{t_{_3}}}\succ_{_R} y_{_{t_{_2}}}\succ_{_R} y_{_{t_{_1}}}$.
As 
 $\{y_{_{t_{_1}}}, \ldots, y_{_{t_{_n}}}\}$ is finite, by an induction argument based on this logic, 
 we obtain the existence of a 
$\succ_{_R}$-cycle, that is, a $P(R)$-cycle $\widetilde{\mathcal{C}}$
which contains the elements of the set $M=\{y_{_1},y_{_2},...,y_{_n}\}$.
By the Lemma of Zorn, the family of all $P(R)$-cycles $(\widetilde{\mathcal{C}}_{_\gamma})_{_{\gamma\in \Gamma}}$,
$\widetilde{\mathcal{C}}_{_\gamma}\subseteq 
A_x$, which contain $M$ has a maximal element, which we will call 
$\widetilde{\mathcal{C}}_{_{\gamma_{_0}}}$.
We prove that $\widetilde{\mathcal{C}}_{_{\gamma_{_0}}}\in \mu(\Xi,\widetilde{R})$. 
Indeed, let $X^{\ast} \widetilde{R}\ \widetilde{\mathcal{C}}_{_{\gamma_{_0}}}$ for some $X^{\ast}\in \Xi$.
Then, there exists $t\in X^{\ast}, s\in \widetilde{\mathcal{C}}_{_{\gamma_{_0}}}$ such that 
$tP(R)s$. If for each $\lambda \in X$ we have $(\lambda,t)\notin \overline{P(R)}$, then
$X^{\ast}=\{t\}$ belongs to the Schwartz set and thus
$X^{\ast}\in\mu(\Xi,\widetilde{R})$.
Otherwise, there exists $\lambda^{\ast}\in X^{\ast}$ such that $(\lambda^{\ast},t)\in\overline{P(R)}$.
Therefore, from $(\lambda^{\ast},t)\in\overline{P(R)}$, $(t,s)\in P(R)$ and $(s,x)\in\overline{P(R)}$ we conclude that 
$\emptyset\subset\overline{P(R)}t\subseteq \overline{P(R)}x$,
which implies that
$t\in A_x$. Since $s\in \widetilde{\mathcal{C}}_{_{\gamma_{_0}}}$ we have that $(t,y_i)\in \overline{P(R)}$ for each 
$i\in \{1,2,...,n\}$. On the other hand, since $t\in A_x$ we have $(y_{i^{\ast}},t)\in \overline{P(R)}$ for some $i^{\ast}\in \{1,2,...,n\}$.
Therefore, from $(t,y_{_{i^{\ast}}})\in \overline{P(R)}$ and $(y_{_{i^{\ast}}},t)\in \overline{P(R)}$ we conclude that 
$t\in \widetilde{\mathcal{C}}_{_{\gamma_{_0}}}$, which is impossible. Hence, 
$\widetilde{\mathcal{C}}_{_{\gamma_{_0}}}\in \mu(\Xi,\widetilde{R})$. Therefore, in any case we have that $\mu(\Xi,\widetilde{R})\neq\emptyset$.
\end{proof}

We now proceed to provide a characterization of the existence of the classical stable set in the sense of von Neumann and Morgenstern  \cite{von}.

\begin{proposition}\label{pan1-early}
Let \( (X, R) \) be an abstract system and define the set of \( R \)-maximal elements as
\[
\mathcal{M}(X, R) = \{ x \in X \mid \{ y \in X \mid yP(R)x \} = \emptyset \}.
\]
Suppose that \( \mathcal{M}(X, R) \) is non-empty and stable with respect to \( P(R) \) in the von Neumann--Morgenstern sense.
Then:
\begin{enumerate}
    \item \( \mathcal{M}(X, R) \subseteq \mathcal{M}(X, \preceq_R) \), i.e., every \( R \)-maximal element is also maximal with respect to \( \preceq_R \);
    \item \( \mathcal{M}(X, R) \) is a stable set in the poset \( (X, \preceq_R) \).
\end{enumerate}
\end{proposition}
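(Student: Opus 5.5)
The plan is to unwind the definition of \(\preceq_{_R}=P(\overline{P(R)})\cup\Delta\) and use one elementary observation: \emph{an \(R\)-maximal element cannot be the endpoint of any finite \(P(R)\)-chain}. Following the convention used in the proof of Lemma \ref{a221-early}, I read \(y\succ_{_R}x\) as \((y,x)\in P(\overline{P(R)})\), with \(y\neq x\). Consistency is not assumed in the statement, so I will not invoke Proposition \ref{prop:consistent-order}. Instead I will check the asymmetric part of \(\overline{P(R)}\) directly wherever it is needed.

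\textbf{Key observation.} Let \(x\in\mathcal{M}(X,R)\) and suppose \((y,x)\in\overline{P(R)}\). Then there are \(y=x_0,x_1,\ldots,x_K=x\) with \(x_{k-1}P(R)x_k\) for all \(k\). Since \(K\ge 1\), the last step gives \(x_{K-1}P(R)x\), which contradicts the \(R\)-maximality of \(x\). Hence no \(y\in X\) satisfies \((y,x)\in\overline{P(R)}\).

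\textbf{Step 1 (part 1).} By the observation, no \(y\) satisfies \((y,x)\in P(\overline{P(R)})\) either. The only \(y\) with \(x\preceq_{_R}y\) is therefore \(y=x\), so \(x\) is \(\preceq_{_R}\)-maximal. This gives \(\mathcal{M}(X,R)\subseteq\mathcal{M}(X,\preceq_{_R})\).

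\textbf{Step 2 (internal stability in the poset).} Take \(x,y\in\mathcal{M}(X,R)\) and suppose \(y\succ_{_R}x\). Then \((y,x)\in\overline{P(R)}\), which the observation forbids. So \(\mathcal{M}(X,R)\) is an antichain for \(\succ_{_R}\).

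\textbf{Step 3 (external stability in the poset).} Let \(z\notin\mathcal{M}(X,R)\). By the assumed vNM external stability with respect to \(P(R)\), there is \(x\in\mathcal{M}(X,R)\) with \(xP(R)z\). Hence \((x,z)\in\overline{P(R)}\), and \(x\neq z\) because \(P(R)\) is irreflexive. It remains to show \((x,z)\in P(\overline{P(R)})\), that is, \((z,x)\notin\overline{P(R)}\). This holds by the observation, since \(x\) is \(R\)-maximal. So \(x\succ_{_R}z\), which is external stability in \((X,\preceq_{_R})\).

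The only delicate point is Step 3. Without consistency, \(\overline{P(R)}\) need not be asymmetric, so one cannot pass automatically from \(\overline{P(R)}\) to its strict part \(P(\overline{P(R)})\). The key observation is exactly what removes this difficulty, because it kills every \(\overline{P(R)}\)-path that ends at a maximal element. Non-emptiness of \(\mathcal{M}(X,R)\) is carried over unchanged from the hypothesis.
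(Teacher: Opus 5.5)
Your proof is correct and follows the paper's argument. Your key observation is exactly the ``last step $x_{n-1}P(R)m$'' argument that the paper repeats inline, both for part 1 and for showing $(x,m)\notin\overline{P(R)}$ in external stability; the only variation is that the paper gets internal stability from part 1 via incomparability of maximal elements rather than applying the observation directly. Your Step 3 is if anything cleaner, since you correctly work with $z\notin\mathcal{M}(X,R)$ and the assumed $P(R)$-stability of $\mathcal{M}(X,R)$, whereas the paper's text slips into writing $\mathcal{M}(X,\preceq_R)$ there.
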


\begin{proof}
We first prove that every \(R\)-maximal element is also maximal with respect to \(\preceq_R\).
Take \(m\in M\). Suppose, towards a contradiction, that \(m\) is not maximal in \((X,\preceq_R)\).
Then there exists \(y\in X\setminus\{m\}\) such that
$
y\preceq_R m.$
Since \(y\neq m\), by the definition
$\preceq_R=P(\overline{P(R)})\cup\Delta$,
we obtain
$
(y,m)\in P(\overline{P(R)}).
$
In particular,
$
(y,m)\in \overline{P(R)}.
$
Hence there exists a finite sequence
$
y=x_0,\;x_1,\dots,x_n=m
$
such that
$
x_i\,P(R)\,x_{i+1}\qquad\text{for all }i\in\{0,\dots,n-1\}.
$
In particular,
$
x_{n-1}\,P(R)\,m,
$
so \(x_{n-1}Rm\), contradicting the fact that \(m\in \mathcal{M}(X,R)\).
Therefore \(m\) is maximal in \((X,\preceq_R)\), and consequently
$
\mathcal{M}(X,R)\subseteq \mathcal{M}(X,\preceq_R).
$

We now prove that \(\mathcal{M}(X,\preceq_R)\) is stable in the poset \((X,\preceq_R)\).
Internal stability follows immediately from the previous part, since maximal elements of a poset
are pairwise incomparable with respect to its strict part.

For external stability, let \(x\in X\setminus \mathcal{M}(X,\preceq_R)\). Since \(\mathcal{M}(X,\preceq_R)\) is stable with respect to \(P(R)\),
there exists \(m\in \mathcal{M}(X,\preceq_R)\) such that
$
m\,P(R)\,x.
$
Thus
$
(m,x)\in \overline{P(R)}.
$
We claim that
$
(x,m)\notin \overline{P(R)}.
$
Indeed, if \((x,m)\in \overline{P(R)}\), then there exists a finite sequence
$
x=z_0,\;z_1,\dots,z_k=m
$
such that
$
z_i\,P(R)\,z_{i+1}\qquad\text{for all }i\in\{0,\dots,k-1\}.
$
In particular,
$
z_{k-1}\,P(R)\,m,
$
so \(z_{k-1}Rm\), again contradicting the \(R\)-maximality of \(m\).
Therefore
$
(m,x)\in P(\overline{P(R)}),
$
that is,
$
m\succ_R x
$
in the poset \((X,\preceq_R)\).

Thus every \(x\in X\setminus \mathcal{M}(X,\preceq_R)\) is dominated in \((X,\preceq_R)\) by some \(m\in \mathcal{M}(X,\preceq_R)\), so \(\mathcal{M}(X,\preceq_R)\)
is externally stable. Therefore \(\mathcal{M}(X,\preceq_R)\) is a stable set in the poset \((X,\preceq_R)\).
\end{proof}

\begin{lemma}\label{law}(\cite[Theorem 1-6.4]{GW}
For a complete lattice $\mathcal{L}$ the Lawson topology $\lambda(\mathcal{L})$ is a
compact topology.
\end{lemma}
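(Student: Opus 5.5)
The plan is to show that every Lawson-open cover of the complete lattice $\mathcal{L}$ has a finite subcover. I will use the Alexander subbasis lemma and, as the core step, the fact that the Scott topology $\sigma$ on a complete lattice is compact.

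\textbf{Subbasis.} By definition $\lambda=\sigma\vee\omega$, so the family
\[
\sigma\cup\{\mathcal{L}\setminus\uparrow x \mid x\in\mathcal{L}\}
\]
is a subbasis for $\lambda$. By the Alexander subbasis lemma it suffices to show the following. If a cover $\mathcal{U}$ of $\mathcal{L}$ consists of Scott-open sets $U_j$ ($j\in J$) together with sets $\mathcal{L}\setminus\uparrow x_k$ ($k\in K$), then $\mathcal{U}$ has a finite subcover.

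\textbf{Choice of point.} Set $a=\bigvee\{x_k\mid k\in K\}$, which exists because $\mathcal{L}$ is complete. If $K=\emptyset$, take $a=\bot$. Since $a\succeq x_k$ for every $k$, we have $a\in\uparrow x_k$, so $a$ lies in none of the sets $\mathcal{L}\setminus\uparrow x_k$. Hence $a\in U_{j_0}$ for some $j_0\in J$.

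\textbf{Case $K=\emptyset$.} Then $a=\bot\in U_{j_0}$. Since $U_{j_0}$ is an upper set, $U_{j_0}=\mathcal{L}$, and $\{U_{j_0}\}$ is already a finite subcover.

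\textbf{Case $K\neq\emptyset$.} Let $D$ be the family of joins of finite nonempty subfamilies of $\{x_k\}$. This family is directed and $\bigvee D=a\in U_{j_0}$. Because $U_{j_0}$ is inaccessible by directed joins, some $d=x_{k_1}\vee\cdots\vee x_{k_n}$ lies in $U_{j_0}$. Now take any $y\in\mathcal{L}$.
\begin{itemize}
\item If $y\notin\uparrow x_{k_i}$ for some $i$, then $y$ lies in $\mathcal{L}\setminus\uparrow x_{k_i}$.
\item Otherwise $y$ is an upper bound of $x_{k_1},\dots,x_{k_n}$, so $y\succeq d$. Since $U_{j_0}$ is an upper set, $y\in U_{j_0}$.
\end{itemize}
So $U_{j_0},\ \mathcal{L}\setminus\uparrow x_{k_1},\dots,\mathcal{L}\setminus\uparrow x_{k_n}$ is a finite subcover.

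\textbf{Main obstacle.} The main point is justifying the Alexander subbasis lemma, which is a standard Zorn's-lemma argument and can be cited. The second point is taking Scott-open sets to be upper sets, the "upper" reading of the paper's definition. Everything else is routine.
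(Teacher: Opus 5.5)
Your proof is correct. You use the subbasis $\sigma\cup\{\mathcal{L}\setminus\uparrow x\}$ with the Alexander lemma, force $a=\bigvee_k x_k$ into a Scott-open member of the cover, and extract a finite subjoin by inaccessibility under directed joins; the case $K=\emptyset$ is handled properly. The paper gives no proof of its own and only cites the continuous-lattice literature, and your argument is the standard proof of this result found there.
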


\begin{definition}\label{pan2} (\cite{fri}).
A subset \( I \) of a poset \( \mathcal{P} \) is called a \textit{Frink ideal} in \( X \) if \( Z^\delta \subseteq I \) for all finite subsets \( Z \subseteq I \). Let \( \mathrm{Fid}(X) \) denote the set of all Frink ideals.
\end{definition}

\begin{definition}\label{pan3} (\cite{ern1}).
Let \( \mathcal{P} \) be a poset and \( A, B \subseteq X \).
\begin{enumerate}
    \item We say that \( A \ll_e B \) if for all Frink ideals \( I \in \mathrm{Fid}(X) \),
    \[
    \uparrow B \cap I^\delta \neq \emptyset \quad \Rightarrow \quad \uparrow A \cap I \neq \emptyset.
    \]
    
    \item The element-wise version is defined by \( x \ll_e y \) if and only if \( \{x\} \ll_e \{y\} \). That is,
    \[
    \forall I \in \mathrm{Fid}(X), \quad y \in I^\delta \Rightarrow x \in I.
    \]
\end{enumerate}
\end{definition}

\begin{remark}\label{pan4}
This definition provides an ideal-theoretic formulation of the way-below relation using Frink ideals.

Given a Frink ideal \( I \), the condition \( y \in I^\delta \Rightarrow x \in I \) means that whenever \( y \) is an upper bound of a finite subset of \( I \), the element \( x \) must already belong to \( I \). Hence, \( x \ll_e y \) captures the idea that \( x \) is ``deep below'' \( y \) in the structure of \( X \), not merely in terms of the order \( \leq \), but in terms of ideal-theoretic approximation.
In other words, every element \( x \in X \) lies in the upper closure of the set of elements that are way-below it.
In fact,
Erné’s way-below relation is a genuine generalization of the Scott way-below relation: it extends the notion from complete lattices - where all directed suprema exist—to arbitrary posets, by replacing directed suprema with ideal-theoretic approximations. Thus, Erné’s definition coincides with the Scott relation in complete lattices, but also applies naturally to all posets.
\end{remark}

\begin{definition}\label{pan5}(\cite{ern1}).
A poset \( \mathcal{P} \) is called {\it precontinuous} if for all \( x \in \mathcal{P} \),
$
x \in \left( \{ y \in \mathcal{P} \mid y \ll_e x \} \right)^\delta.$
\end{definition}

\begin{lemma}\label{pan6} (\cite[Theorem 1]{ern1}, \cite[Theorem 4.7]{ZH}).
For a poset $\mathcal{P}$, the following two conditions are equivalent:
\begin{enumerate}
    \item $\mathcal{P}$ is precontinuous;
    \item $\delta(\mathcal{P})$ is a continuous lattice.
\end{enumerate}
\end{lemma}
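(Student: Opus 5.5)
We identify \(\mathcal{P}\) with its image in \(\delta(\mathcal{P})\) via the order embedding \(x\mapsto {\downarrow}x=\{x\}^\delta\). We then translate Erné's relation \(\ll_e\) into the ordinary (Scott) way-below relation \(\ll\) of the complete lattice \(\delta(\mathcal{P})\). Both implications then reduce to the standard fact that a complete lattice is continuous iff every element is the join of the elements way below it.

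\emph{Step 1: directed joins in \(\delta(\mathcal{P})\) are exactly the cuts \(I^\delta\), \(I\) a Frink ideal.} Meets in \(\delta(\mathcal{P})\) are intersections, so the join of a family \((C_j)\) of cuts is \((\bigcup_j C_j)^\delta\).
\begin{itemize}
\item If \((C_j)\) is directed, then \(I=\bigcup_j C_j\) is a Frink ideal. Indeed, a finite \(Z\subseteq I\) lies in a single \(C_j\), so \(Z^\delta\subseteq C_j^\delta=C_j\subseteq I\). Hence the directed join equals \(I^\delta\).
\item Conversely, every Frink ideal \(I\) is the directed union of the cuts \(Z^\delta\), \(Z\subseteq I\) finite. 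Each such \(Z\) lies in its own cut, so \(I\subseteq\bigcup Z^\delta\), and \(\bigcup Z^\delta\subseteq I\) by the Frink property. Therefore \(I^\delta=\bigvee\{Z^\delta\}\).
\end{itemize}
It follows that, for \(x\in\mathcal{P}\) and a cut \(C\), \({\downarrow}x\ll C\) in \(\delta(\mathcal{P})\) iff for every Frink ideal \(I\) with \(C\subseteq I^\delta\) we have \({\downarrow}x\subseteq Z^\delta\) for some finite \(Z\subseteq I\). That last condition forces \(x\in Z^\delta\subseteq I\). Conversely, if \(x\in I\), take \(Z=\{x\}\). Taking \(C={\downarrow}y\) and noting that \({\downarrow}y\subseteq I^\delta\) iff \(y\in I^\delta\), we get
\[
x\ll_e y \iff {\downarrow}x\ll {\downarrow}y \text{ in } \delta(\mathcal{P}).
\]
I expect this step to be the main obstacle: one has to check carefully that Frink ideals are precisely the right index objects for directed joins of cuts. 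All the rest is formal.

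\emph{Step 2: (1)\(\Rightarrow\)(2).} Let \(C\in\delta(\mathcal{P})\). The principal cuts are join-dense, so \(C=\bigvee_{y\in C}{\downarrow}y\).
\begin{itemize}
\item For \(y\in C\), precontinuity says \(y\in\{x\mid x\ll_e y\}^\delta\). This means \({\downarrow}y=\bigvee\{{\downarrow}x\mid x\ll_e y\}\).
\item By Step 1, each such \(x\) satisfies \({\downarrow}x\ll{\downarrow}y\subseteq C\), hence \({\downarrow}x\ll C\).
\end{itemize}
Thus \(C\le\bigvee\{K\in\delta(\mathcal{P})\mid K\ll C\}\le C\), so every element of \(\delta(\mathcal{P})\) is the join of the elements way below it, and \(\delta(\mathcal{P})\) is continuous.

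\emph{Step 3: (2)\(\Rightarrow\)(1).} Suppose \(\delta(\mathcal{P})\) is continuous and fix \(y\in\mathcal{P}\). Then \({\downarrow}y=\bigvee\{K\mid K\ll{\downarrow}y\}\).
\begin{itemize}
\item Each such \(K\) is the join of the principal cuts \({\downarrow}x\subseteq K\).
\item For each of these, \({\downarrow}x\le K\ll{\downarrow}y\) gives \({\downarrow}x\ll{\downarrow}y\), that is, \(x\ll_e y\) by Step 1.
\end{itemize}
Hence \({\downarrow}y=\bigvee\{{\downarrow}x\mid x\ll_e y\}=\{x\mid x\ll_e y\}^\delta\). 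Therefore \(y\in\{x\mid x\ll_e y\}^\delta\), which is precontinuity of \(\mathcal{P}\).
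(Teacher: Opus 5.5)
Your proof is correct. The paper gives no proof of this lemma and only cites it from the literature (Erné's Theorem~1), so your proposal replaces a citation with a self-contained argument. It appears to follow the standard route behind Erné's result:
\begin{itemize}
\item every directed join in $\delta(\mathcal{P})$ is $I^\delta$, where $I$ is the union of the family and is a Frink ideal;
\item each such $I^\delta$ is the join of the canonical directed family $\{Z^\delta : Z\subseteq I \text{ finite}\}$;
\item hence $x\ll_e y$ iff ${\downarrow}x\ll{\downarrow}y$ in $\delta(\mathcal{P})$;
\item both implications then follow formally from join-density of the principal cuts.
\end{itemize}
The citation buys brevity. Your argument buys the translation lemma itself, which the paper needs but never states. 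The identification lemma right after this one, and the Upper MacNeille Informational Monotonicity condition, silently replace Erné's $\ll_e$ on $\mathcal{P}$ by the way-below relation of $\delta(\mathcal{P})$. Your Step~1 is exactly what shows that the two relations agree on elements of $\mathcal{P}$.

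Two minor remarks.

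First, in Step~2, precontinuity only gives ${\downarrow}y\subseteq\{x : x\ll_e y\}^\delta=\bigvee\{{\downarrow}x : x\ll_e y\}$. The reverse inclusion, which your word ``means'' asserts, uses $x\ll_e y\Rightarrow x\preceq y$. This follows from your Step~1, since way-below implies below, or directly by testing against the Frink ideal ${\downarrow}y$. In any case only the inclusion is needed there.

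Second, you correctly use the standard definition of $\ll$: if $C\le\bigvee\mathcal{D}$ for a directed $\mathcal{D}$, then $K\le D$ for some $D\in\mathcal{D}$. The paper's displayed definition of the way-below relation has this inequality reversed ($\bigvee D\preceq y$). Read literally, that version makes only the bottom element way below anything, so the lemma would fail with it.
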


\begin{lemma}\label{lem:identification}
Let $\mathcal{P}$ be a precontinuous poset and let
$i:\mathcal{P}\longrightarrow\delta(\mathcal{P})$
denote its canonical order embedding into its Dedekind--MacNeille
completion. Since, by Lemma~3.7, $\delta(\mathcal{P})$ is a continuous lattice,
throughout the sequel we identify every element
$x\in \mathcal{P}$ with its image $i(x)\in\delta(\mathcal{P})$.
Accordingly, all notions depending on continuity, including the
way-below relation, the Scott topology and the Lawson topology,
are considered on $\delta(\mathcal{P})$.

\end{lemma}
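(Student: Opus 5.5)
The content of the statement to be justified is twofold: first, that the canonical map $i:\mathcal{P}\to\delta(\mathcal{P})$ is a genuine order embedding, so that identifying $x$ with $i(x)$ loses no order information; second, that $\delta(\mathcal{P})$ carries the structure (continuous, hence complete, lattice) on which the way-below relation and the Scott and Lawson topologies behave as in the standard theory. My plan is to define $i(x)=\{x\}^\delta=(\uparrow x)^\downarrow={\downarrow}x$ and check the embedding property directly. Then I will invoke Lemma~\ref{pan6} for continuity and Lemma~\ref{law} for compactness of $\lambda(\delta(\mathcal{P}))$.

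\textbf{Step 1 (embedding).} For $x\in\mathcal{P}$, the set of upper bounds of $\{x\}$ is $\uparrow x$, and the lower bounds of $\uparrow x$ form exactly ${\downarrow}x$, since $x$ itself lies in $\uparrow x$. Hence $i(x)={\downarrow}x\in\delta(\mathcal{P})$. Then $x\preceq y$ holds iff ${\downarrow}x\subseteq{\downarrow}y$. The forward direction is transitivity of $\preceq$. The backward direction uses $x\in{\downarrow}x$, which rests on reflexivity; antisymmetry then gives injectivity. So $i$ is an order embedding into $(\delta(\mathcal{P}),\subseteq)$.

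\textbf{Step 2 (lattice structure).} I will recall the standard facts that $\delta(\mathcal{P})$, ordered by inclusion, is a complete lattice: meets are intersections, and joins are $(\bigcup A_j)^\delta$. I will also note that $i$ preserves all joins and meets that already exist in $\mathcal{P}$, so the identification is compatible with any suprema computed in $\mathcal{P}$. By Lemma~\ref{pan6}, precontinuity of $\mathcal{P}$ makes $\delta(\mathcal{P})$ a continuous lattice. The way-below relation $\ll$, the Scott topology $\sigma$, the lower topology $\omega$ and the Lawson topology $\lambda=\sigma\vee\omega$ are therefore well defined on $\delta(\mathcal{P})$, and $\lambda$ is compact by Lemma~\ref{law}.

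\textbf{Step 3 (main obstacle).} The only delicate point is consistency between notions on $\mathcal{P}$ and on $\delta(\mathcal{P})$. Specifically, I need the Erné relation $x\ll_e y$ on $\mathcal{P}$ to match $i(x)\ll i(y)$ in $\delta(\mathcal{P})$, so that the identification does not change meaning. I would try to take this from the proof of \cite[Theorem 1]{ern1}. If a self-contained argument is needed, I would use the correspondence between Frink ideals $I$ of $\mathcal{P}$ and directed families of cuts, since $I^\delta$ is the join of the cuts $Z^\delta$ for finite $Z\subseteq I$. From this correspondence, $y\in I^\delta\Rightarrow x\in I$ translates into the directed-join condition defining $\ll$.

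Once that compatibility is in place, the statement is simply a convention justified by Steps 1–3. All subsequent topological notions are understood on $\delta(\mathcal{P})$ and pulled back to $\mathcal{P}$ via $i$.
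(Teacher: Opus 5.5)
Your proposal is correct and matches the paper's justification. The paper gives no argument beyond the canonical embedding $x\mapsto{\downarrow}x$ and Lemma~\ref{pan6}, which is exactly what your Steps 1--2 do. Your Step 3 goes beyond what the statement requires, since the convention simply stipulates that $\ll$ and the Scott and Lawson topologies are computed in $\delta(\mathcal{P})$; your sketch is nonetheless sound, because a Frink ideal $I$ is the union of the directed family $\{Z^\delta : Z\subseteq I \text{ finite}\}$ whose join is $I^\delta$, and conversely the union of a directed family of cuts is a Frink ideal, so $x\ll_e y$ holds iff $i(x)\ll i(y)$.
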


\begin{definition}\label{ast1}
Let \( (X, \tau) \) be a topological space and \( R \subseteq X \times X \) a binary relation.
We say that the relation \( R \) is \textit{Nachbin closed} (\cite{nac})  if the set
\vspace{5pt}
\begin{center}
$G(\preceq_{_R}) = \{ (x, y) \in X \times X \mid x \preceq_{_R} y \}$
\end{center}
is a closed subset of the product space \( (X \times X, \tau \times \tau) \); that is,
$(X \times X) \setminus G(\preceq_{_R}) \in \tau \times \tau.
$
Equivalently, \( R \) is Nachbin closed if for every sequence \( (x_n, y_n) \in G(\preceq_{_R}) \) such that \( x_n \to x \) and \( y_n \to y \), it follows that
$(x, y) \in G(\preceq_{_R})$, that is, $x \preceq_{_R} y$.
Also, by \cite[Page 26]{nac}, a relation is Nachbin closed with respect to the topology $\tau$ if and only if for every pair $x, y \in X$ such that $(x,y)\notin R$, there exist a decreasing open neighborhood $O_y$ of $y$ and an increasing open neighborhood $O_x$ of $x$ such that $O_x \cap O_y = \emptyset$.
\end{definition}

\begin{proposition}\label{pan9}
Let $(X,R)$ be an abstract system such that
$
\mathcal{P}_R=(X,\preceq_R)
$
is a continuous lattice. Then $R$ is Nachbin closed in the Lawson topology
$
\lambda(\mathcal{P}_R).
$
\end{proposition}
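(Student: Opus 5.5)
The statement to prove is that the graph $G(\preceq_R)=\{(x,y)\mid x\preceq_R y\}$ is closed in $\lambda(\mathcal{P}_R)\times\lambda(\mathcal{P}_R)$. This is the classical fact that a continuous lattice with its Lawson topology is a compact pospace. My plan is to prove it through the separation criterion recalled in Definition~\ref{ast1}. Fix $x,y\in X$ with $x\not\preceq_R y$. I will build an open upper set $O_x\ni x$ and an open lower set $O_y\ni y$ that are disjoint. The product $O_x\times O_y$ is then an open neighbourhood of $(x,y)$. It misses $G(\preceq_R)$: if $a\in O_x$, $b\in O_y$ and $a\preceq_R b$, then $a\in O_y$ because $O_y$ is a lower set, contradicting $O_x\cap O_y=\emptyset$. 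So the complement of the graph is open.

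The construction uses continuity of $\mathcal{P}_R$, which means $x=\bigvee\{u\mid u\ll x\}$ with the set $\{u\mid u\ll x\}$ directed. If every $u\ll x$ satisfied $u\preceq_R y$, then $x=\bigvee\{u\mid u\ll x\}\preceq_R y$, which is false. Hence there is some $u\ll x$ with $u\not\preceq_R y$. I then set
\[
O_x=\mathord{\uparrow}\!\!\!\uparrow u=\{z\mid u\ll z\},\qquad O_y=X\setminus \uparrow u .
\]
The standard facts needed about continuous lattices are these. First, $\mathord{\uparrow}\!\!\!\uparrow u$ is Scott open; this uses the interpolation property of $\ll$, which holds in continuous lattices (cited from \cite{GW}). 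Second, $X\setminus\uparrow u$ is open in the lower topology $\omega$ by definition. Both sets are therefore Lawson open, since $\lambda=\sigma\vee\omega$.

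It remains to check the required properties. $O_x$ is an upper set and contains $x$ because $u\ll x$. $O_y$ is a lower set, being the complement of an upper set, and contains $y$ because $u\not\preceq_R y$. The two sets are disjoint, since $u\ll z$ implies $u\preceq_R z$, so $O_x\subseteq\uparrow u$.

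The step I expect to be the main obstacle is the Scott openness of $\mathord{\uparrow}\!\!\!\uparrow u$, which needs the interpolation property; I will quote it from \cite[Ch.~I]{GW} rather than reprove it. A minor point is that Definition~\ref{ast1} also states closedness via sequences, which in a non-first-countable space is weaker than closedness. I will work with the separation or product-open formulation, which yields the genuine closedness required.
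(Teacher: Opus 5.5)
Your proposal is correct and follows the paper's proof: you choose $u\ll x$ with $u\not\preceq_R y$, then separate $x$ and $y$ by the Scott-open upper set $\{w: u\ll w\}$ and the $\omega$-open lower set $X\setminus\uparrow u$. The differences are only in presentation: you derive the existence of $u$ from $x=\bigvee\{u: u\ll x\}$ and check directly that $O_x\times O_y$ misses the graph, whereas the paper cites a remark in Gierz et al.\ and Nachbin's separation criterion for these two steps.
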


\begin{proof}
By \cite[Theorems III-1.9 and III-1.10]{GH},
$
\sigma(\mathcal{P}_R)\vee\omega(\mathcal{P}_R)
=
\lambda(\mathcal{P}_R)
$
is a compact topology.
To show that $\preceq_R$ is closed in
$\lambda(\mathcal{P}_R)$, suppose that
$(x,y)\notin\preceq_R$
for some $x,y\in X$.
By the remark following
\cite[Definition I-1.6]{GH},
there exists
$z\in X$
such that
$z\ll x$
and
$(z,y)\notin\preceq_R$.
Then
$
X\setminus
\{\,t\in X: z\preceq_R t\,\}
$
is a decreasing $\omega$-open (hence Lawson-open)
neighbourhood of $y$, whereas
$
\{\,w\in X: z\ll w\,\}
$
is an increasing Scott-open (hence Lawson-open)
neighbourhood of $x$.
Moreover,
$
\left(
X\setminus
\{\,t\in X: z\preceq_R t\,\}
\right)
\cap
\{\,w\in X: z\ll w\,\}
=\varnothing.
$
Therefore, by Definition~\ref{ast1},
$\preceq_R$
is closed in
$
\lambda(\mathcal{P}_R)\times
\lambda(\mathcal{P}_R).
$
\end{proof}

By \cite[Theorem I-6.4]{GW}, the diagonal
$\Delta$
is closed in $\lambda(\mathcal{P}_R).$

\begin{proposition}\label{prop:macneille-nachbin}
Let $(X,R)$ be a consistent abstract system, and let
$
\mathcal{P}_R=(X,\preceq_R)
$
be the induced poset. Assume that $\mathcal{P}_R$ is precontinuous.
Then the induced order on the canonical image
$
i(X)\subseteq\delta(\mathcal{P}_R)
$
is Nachbin closed in the Lawson subspace topology inherited from
$
\lambda\!\left(\delta(\mathcal{P}_R)\right).
$
\end{proposition}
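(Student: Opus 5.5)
By Lemma~\ref{pan6}, the precontinuity of $\mathcal{P}_R$ makes the MacNeille completion $L:=\delta(\mathcal{P}_R)$ a continuous lattice; by Lemma~\ref{law} it is also compact in its Lawson topology $\lambda(L)$. The plan is to prove Nachbin closedness in $L$ first and then restrict it to the subspace $i(X)$.

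\textbf{Step 1: closedness of the order of $L$.} I apply the argument of Proposition~\ref{pan9} to $L$ in place of $\mathcal{P}_R$. Take $a,b\in L$ with $a\not\le b$. Continuity of $L$ gives $a=\bigvee\{z\in L: z\ll a\}$, so there is some $z\ll a$ with $z\not\le b$. Now set
\[
U:=\{w\in L: z\ll w\}, \qquad W:=L\setminus\uparrow z .
\]
In a continuous lattice, interpolation makes $U$ Scott-open. It is increasing, and it contains $a$. The set $W$ is $\omega$-open, decreasing, and contains $b$. If $w\in U\cap W$, then $z\ll w$ gives $z\le w$, which contradicts $w\notin\uparrow z$; hence $U\cap W=\emptyset$. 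Both sets are Lawson-open, so the criterion of Definition~\ref{ast1} shows that $\le_L$ is closed in $\lambda(L)\times\lambda(L)$.

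\textbf{Step 2: restriction to $i(X)$.} Since $i$ is an order embedding, $x\preceq_R y$ holds if and only if $i(x)\le i(y)$. Hence
\[
G(\preceq_R)\cong \{(i(x),i(y))\}\cap G(\le_L)=G(\le_L)\cap\bigl(i(X)\times i(X)\bigr).
\]
The product of subspace topologies is the subspace topology of the product, so this set is closed in $i(X)\times i(X)$. Equivalently, for $(x,y)\notin\preceq_R$ the sets $U\cap i(X)$ and $W\cap i(X)$ are relatively open, disjoint, and respectively increasing and decreasing in the induced order. They contain $i(x)$ and $i(y)$. This is exactly Nachbin closedness in the subspace topology.

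\textbf{Main obstacle.} The hard point is Step~1. The element $z$ is taken in $L$, not in $i(X)$, so I must make sure that the separating neighbourhoods are built in $L$ and only intersected with $i(X)$ at the end. This is the reason for working in $\delta(\mathcal{P}_R)$, as Lemma~\ref{lem:identification} prescribes. I also need the standard fact that $\{w: z\ll w\}$ is Scott-open in a continuous lattice, which follows from interpolation. Consistency of $R$ is used only through Proposition~\ref{prop:consistent-order}, which guarantees that $\preceq_R$ is a genuine partial order, so that $\delta(\mathcal{P}_R)$ and $i$ are well defined.
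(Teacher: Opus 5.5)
Your proposal is correct and follows the paper's route. The paper likewise passes to the continuous lattice $\delta(\mathcal{P}_R)$ via Lemma~\ref{pan6}, uses the fact that the order of a continuous lattice is closed in its Lawson topology, and restricts the closed graph to $i(X)\times i(X)$; your Step~1 simply proves that cited fact by rerunning the argument of Proposition~\ref{pan9} inside $\delta(\mathcal{P}_R)$. As a minor aside, consistency is not actually needed here: $\preceq_R=P(\overline{P(R)})\cup\Delta$ is a partial order for any $R$, because the asymmetric part of a transitive relation is transitive.
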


\begin{proof}
By Lemma~3.7, the Dedekind--MacNeille completion
$
\delta(\mathcal{P}_R)
$
is a continuous lattice. Hence,
$
\lambda\!\left(\delta(\mathcal{P}_R)\right)
$
is a compact Hausdorff topology.
By Remark~3.8, we identify $\mathcal{P}_R$ with its canonical image
$
i(\mathcal{P}_R)\subseteq\delta(\mathcal{P}_R).
$
Accordingly, the order $\preceq_R$ is identified with the restriction of the
order of $\delta(\mathcal{P}_R)$ to
$
i(X)\times i(X).
$
Since the order relation of every continuous lattice is Nachbin closed in its
Lawson topology, the graph of the order of
$\delta(\mathcal{P}_R)$ is closed in
$
\delta(\mathcal{P}_R)\times\delta(\mathcal{P}_R).
$
Its restriction to the subspace
$
i(X)\times i(X)
$
is therefore closed. Hence, the induced order on
$i(X)$ is Nachbin closed in the Lawson subspace topology inherited from
$\lambda(\delta(\mathcal{P}_R))$.
\end{proof}

\begin{remark}
Proposition~\ref{pan9} is a special case of
Proposition~\ref{prop:macneille-nachbin}.
Indeed, every continuous lattice is complete, and therefore, if
$
\mathcal{P}_R
$
is itself a continuous lattice, then its Dedekind--MacNeille completion satisfies
$
\delta(\mathcal{P}_R)\cong\mathcal{P}_R,
$
the canonical embedding
\[
i:\mathcal{P}_R\longrightarrow\delta(\mathcal{P}_R)
\]
being an order isomorphism.
Under this identification, the hypotheses and the conclusion of
Proposition~\ref{prop:macneille-nachbin}
reduce exactly to those of Proposition~\ref{pan9}. We nevertheless state
Proposition~\ref{pan9} separately because its proof is direct and
self-contained, and because it introduces the geometric argument that is
reused in the sequel: combining the Scott-open neighbourhood
\[
\{\,w\in X: z\ll w\,\}
\]
with the decreasing $\omega$-open neighbourhood
\[
X\setminus\uparrow_{\preceq_R}z,
\]
for a suitably chosen element
$z\ll x$.
\end{remark}

\begin{proposition}
Let $(X, \preceq)$ be a partially ordered set (poset) and suppose that the set of maximal elements $M(X, \preceq)$ is non-empty and stable with respect to $\preceq$. Then, for every $x \in X$, there exists a maximal chain $\mathcal{C}_x \subseteq X$ such that $x \in \mathcal{C}_x$ and $\mathcal{C}_x \cap M(X, \preceq) \neq \emptyset$. Furthermore, the space $X$ can be represented as the union of such maximal chains:
\[
X = \bigcup_{m \in M(X, \preceq)} \mathcal{C}_m
\]
where each $\mathcal{C}_m$ is a maximal chain containing the maximal element $m$. Each such chain $\mathcal{C}_m$ is:
\begin{itemize}
    \item totally ordered by construction,
    \item upward-directed, since any two elements in the chain have a common upper bound (the element $m$),
    \item and maximal with respect to inclusion among all $\preceq$-chains in $X$ by Zorn's Lemma.
\end{itemize}
\end{proposition}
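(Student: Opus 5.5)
The plan is to build, for each $x\in X$, one specific maximal chain from the stability hypothesis, and then read off the three listed properties. Fix $x\in X$. If $x\in M(X,\preceq)$, put $m_x:=x$. Otherwise $x\in X\setminus M(X,\preceq)$, and external stability of $M(X,\preceq)$ gives some $m_x\in M(X,\preceq)$ with $x\prec m_x$. In either case $\{x,m_x\}$ is a $\preceq$-chain.

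Next I extend this chain to a maximal one. Consider the family of all $\preceq$-chains in $X$ that contain $\{x,m_x\}$, ordered by inclusion. It is non-empty, and the union of a nested subfamily is again a chain containing $\{x,m_x\}$. By Zorn's Lemma (equivalently, the Hausdorff maximal principle) the family has a maximal member $\mathcal{C}_x$. This $\mathcal{C}_x$ is maximal among \emph{all} chains of $X$: any chain containing $\mathcal{C}_x$ also contains $\{x,m_x\}$, so it belongs to the family. By construction $x\in\mathcal{C}_x$ and $m_x\in\mathcal{C}_x\cap M(X,\preceq)$, which proves the first assertion.

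For the listed properties, the chain is totally ordered by definition. For directedness, let $c\in\mathcal{C}_x$. Then $c$ and $m_x$ are comparable. If $m_x\preceq c$, maximality of $m_x$ forces $c=m_x$, so in all cases $c\preceq m_x$. Hence $m_x$ is the greatest element of $\mathcal{C}_x$, and any finite subset of $\mathcal{C}_x$ has the common upper bound $m_x\in\mathcal{C}_x$, so the chain is upward-directed. Maximality with respect to inclusion was obtained in the previous step. For the covering, I write $\mathcal{C}_m$ for a chain $\mathcal{C}_x$ with $m_x=m$. Since every $x$ lies in its own $\mathcal{C}_x$, and $\mathcal{C}_x$ contains the maximal element $m_x$, we get $X=\bigcup_{x\in X}\mathcal{C}_x$, and each term of this union is one of the chains $\mathcal{C}_m$ with $m\in M(X,\preceq)$.

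The only point needing care, and the one I expect to be the main obstacle, is the indexing in the displayed union. A single maximal element $m$ may lie in many distinct maximal chains. One chain $\mathcal{C}_m$ per $m$ need not cover $X$; for example, two incomparable elements below the same $m$ cannot lie in one chain. I will therefore read $\bigcup_{m\in M(X,\preceq)}\mathcal{C}_m$ as the union over $m\in M(X,\preceq)$ of all the chains $\mathcal{C}_x$ with $m_x=m$, that is, all maximal chains through $m$ produced above. With that reading the equality follows from $x\in\mathcal{C}_x$ for every $x$. Apart from this, the argument uses only external stability and Zorn's Lemma. Internal stability of $M(X,\preceq)$ is automatic in a poset and is not needed.
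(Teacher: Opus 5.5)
Your proof is correct and has the same skeleton as the paper's. External stability supplies a maximal element above $x$, and Zorn's Lemma extends a chain through $x$ and that element to a maximal chain.

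You differ in the chain you use as the starting point for Zorn. The paper's first line already yields $m_0\succeq x$, yet it then builds a transfinite sequence $f$ to produce a chain $C_x$ from $x$ up to some $m$. It then applies Zorn to the chains containing $m$ and $C_x$. You apply Zorn directly to the chains containing $\{x,m_x\}$.

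This buys rigor as well as brevity. The transfinite construction is superfluous, and as written it is not well defined:
\begin{itemize}
\item successor terms are chosen in $X\setminus M(X,\preceq)$, so they can never reach $M(X,\preceq)$;
\item limit stages are given no value;
\item termination in $M(X,\preceq)$ is asserted from stability rather than proved.
\end{itemize}

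You also check two things the paper only asserts. First, a maximal member of your family is maximal among all chains of $X$. Second, $m_x$ is the greatest element of $\mathcal{C}_x$, which is what actually justifies upward-directedness.

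Your caveat about the displayed union is correct. With a single chain per $m$ the equality can fail, as your example of two incomparable elements below one maximal element shows. The paper's proof establishes exactly your reading, since its $\mathcal{C}_m$ is built from $C_x$ and so depends on $x$ as well as on $m$.
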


\begin{proof}
Fix any $x \in X$. Since $M(X, \preceq)$ is stable with respect to $\preceq$, there exists $m_0 \in M(X, \preceq)$ such that $m_0 \succeq x$. If $x \in M(X, \preceq)$, the result is trivial. If $x \notin M(X, \preceq)$, then $x$ is not maximal, and there exists $x_1 \in X \setminus M(X, \preceq)$ such that $x_1 \succeq x$. By the stability of $M(X, \preceq)$, there exists $m_1 \in M(X, \preceq)$ such that $m_1 \succeq x_1$, implying $m_1 \succeq x_1 \succeq x$.

We repeat this process: for each $x_{k-1} \in X \setminus M(X, \preceq)$, there exists $x_k \in X$ such that $x_k \succeq x_{k-1}$, eventually reaching some $m_k \in M(X, \preceq)$. If the process does not terminate in finitely many steps, we construct a transfinite sequence $f : \alpha \to X$ indexed by ordinals $\alpha \in \text{Ord}$, defined as follows:
\begin{itemize}
    \item $f(0) = x$,
    \item If $f(\alpha) \notin M(X, \preceq)$, choose $f(\alpha+1) \in X \setminus M(X, \preceq)$ such that $f(\alpha+1) \succeq f(\alpha)$,
    \item For limit ordinals $\alpha$, we utilize the property that there exist ordinals $\alpha' < \alpha'' < \alpha$ such that $f(\alpha'') \succeq f(\alpha')$.
\end{itemize}

Since $M(X, \preceq)$ is stable, the process must terminate at some ordinal $\alpha_\infty$ with $f(\alpha_\infty) = m \in M(X, \preceq)$. Let $C_x \subseteq X$ be the image of the sequence $f$, which constitutes a $\succeq$-chain from $m$ down to $x$. By construction, $C_x$ is totally ordered by $\preceq$.

Now, let $\mathcal{F}_m$ be the family of all $\preceq$-chains in $X$ that:
\begin{itemize}
    \item contain the maximal element $m \in M(X, \preceq)$,
    \item are totally ordered with respect to $\preceq$,
    \item and contain all elements in the chain $C_x$.
\end{itemize}
The family $\mathcal{F}_m$ is partially ordered by inclusion. Every chain in $\mathcal{F}_m$ has an upper bound (the union of the chains), which is also a totally ordered chain. Thus, by Zorn's Lemma, there exists a maximal chain $\mathcal{C}_m \in \mathcal{F}_m$.

Therefore, for every $x \in X$, there exists $m \in M(X, \preceq)$ such that $x \in \mathcal{C}_m$, and consequently:
\[
X = \bigcup_{m \in M(X, \preceq)} \mathcal{C}_m
\]
Finally, each $\mathcal{C}_m$ is totally ordered by construction, upward-directed (as a chain with a maximum element $m$), and maximal with respect to inclusion among $\preceq$-chains by Zorn's Lemma.
\end{proof}

\begin{remark} \label{rem:stability_decomposition}
In our setting, the stability of the set of maximal elements $M(X, \preceq)$ is decisive: it ensures the order-theoretic decomposition of the poset into maximal chains $\mathcal{C}_m$ containing these maximal elements, which accurately reflects the original partial order $\preceq$. While Zorn's Lemma justifies the existence of these maximal chains through each element $x \in X$, the union of such chains, organized by stable maximal elements, defines a structure in which:
\begin{itemize}
    \item The original partial order remains unchanged throughout the construction, as we avoid the mathematical error of assuming that principal downsets $[m, \to) = \{x \in X \mid m \succeq x\}$ are inherently totally ordered.
    \item No artificial comparabilities or extensions of the order are introduced at any stage, since every element is associated with at least one intrinsic maximal chain $\mathcal{C}_x$ reaching a stable maximal element.
    \item The precontinuity condition (Definition 3.5) is verified with respect to the initial order $\preceq$, rather than a modified or completion-based ordering.
\end{itemize}
This decomposition provides the necessary foundation so that the precontinuity condition is satisfied strictly with respect to the initial order $\preceq$.
\end{remark}

\begin{lemma}\label{lem:frink-downset}
Every Frink ideal $I$ of a poset $\mathcal P$ is a down-set: if $y\in I$ and $a\preceq y$, then $a\in I$.
\end{lemma}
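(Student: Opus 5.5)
We argue directly from Definition~\ref{pan2}, using only the identity $Z^\delta=(Z^\uparrow)^\downarrow$ for a finite set $Z$. The idea is to apply the Frink ideal condition to the singleton $Z=\{y\}$. This is a finite subset of $I$ whenever $y\in I$. Hence it suffices to show that $a\in\{y\}^\delta$ whenever $a\preceq y$.

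First we compute $\{y\}^\uparrow$, the set of upper bounds of $\{y\}$. This is exactly $\uparrow y=\{x\in X\mid y\preceq x\}$. Next, $\{y\}^\delta=(\uparrow y)^\downarrow$ is the set of lower bounds of $\uparrow y$. Because $y\in\uparrow y$ by reflexivity, every lower bound of $\uparrow y$ lies below $y$. Conversely, if $a\preceq y$ and $x\in\uparrow y$, then transitivity gives $a\preceq x$, so $a$ is a lower bound of $\uparrow y$. Therefore $\{y\}^\delta=\downarrow y=\{a\in X\mid a\preceq y\}$.

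To finish, take $y\in I$ and $a\preceq y$. Then $\{y\}$ is a finite subset of $I$, so the Frink condition gives $\{y\}^\delta\subseteq I$. Since $a\in\{y\}^\delta=\downarrow y$, we get $a\in I$. This is the claim.

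No step is a real obstacle. The only points to check are that singletons count as finite subsets in Definition~\ref{pan2}, and that the computation of $\{y\}^\delta$ uses only reflexivity and transitivity of $\preceq$. Antisymmetry is never used.
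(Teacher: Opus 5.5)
Your proposal is correct and is essentially the paper's argument: apply the Frink condition to the singleton $Z=\{y\}$ and use transitivity to show $a\in\{y\}^\delta$. The only difference is that you also prove the reverse inclusion $\{y\}^\delta\subseteq{\downarrow}y$, which is true but not needed for the claim.
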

\begin{proof}
Let $Z=\{y\}\subseteq I$, a finite subset. For every $u\in Z^\uparrow$ (i.e.\ $u\succeq y$), transitivity with $a\preceq y$ gives $a\preceq u$; hence $a\in(Z^\uparrow)^\downarrow=Z_\delta$. Since $I$ is a Frink ideal and $Z\subseteq I$ is finite, $Z_\delta\subseteq I$. So $a\in I$.
\end{proof}

\begin{lemma}\label{lem:chain-frink}
Let $\mathcal C_m$ be a chain in $X$ and $I$ a Frink ideal of $X$. Then $I\cap\mathcal C_m$ is a Frink ideal of $\mathcal C_m$ (with Frink closure computed using only the order restricted to $\mathcal C_m$).
\end{lemma}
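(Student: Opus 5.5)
The plan is to verify the Frink condition directly inside the chain $\mathcal C_m$, using two facts: every finite subset of a chain has a largest element, and Frink ideals of $X$ are down-sets (Lemma~\ref{lem:frink-downset}). For $A\subseteq\mathcal C_m$, write $A^{\uparrow_{\mathcal C}}$, $A^{\downarrow_{\mathcal C}}$ and $A^{\delta_{\mathcal C}}=(A^{\uparrow_{\mathcal C}})^{\downarrow_{\mathcal C}}$ for the bounds and the Frink closure computed only with the order restricted to $\mathcal C_m$. The task is to show that $Z^{\delta_{\mathcal C}}\subseteq I\cap\mathcal C_m$ for every finite $Z\subseteq I\cap\mathcal C_m$.

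First take $Z$ finite and non-empty. Because $\mathcal C_m$ is totally ordered, $Z$ has a largest element $z_0$. The upper bounds of $Z$ in $\mathcal C_m$ are then exactly $\uparrow z_0\cap\mathcal C_m$, and this set contains $z_0$. Hence every element of $Z^{\delta_{\mathcal C}}$ lies below $z_0$. Conversely, every element of $\mathcal C_m$ lying below $z_0$ is a lower bound of $\uparrow z_0\cap\mathcal C_m$ by transitivity. This gives
\[
Z^{\delta_{\mathcal C}}=\{a\in\mathcal C_m\mid a\preceq z_0\}.
\]
Since $z_0\in I$ and $I$ is a down-set of $X$ by Lemma~\ref{lem:frink-downset}, every such $a$ lies in $I$. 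Therefore $Z^{\delta_{\mathcal C}}\subseteq I\cap\mathcal C_m$. Note that this case uses only the down-set property of $I$, not the full Frink closure in $X$.

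The only delicate point, and the step I expect to be the main obstacle, is the degenerate case $Z=\emptyset$. Here $Z^{\delta_{\mathcal C}}$ is the set of lower bounds of $\mathcal C_m$ inside $\mathcal C_m$, namely its least element if one exists. I would argue as follows.

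If $I\cap\mathcal C_m\neq\emptyset$, choose any $y$ in it. Every lower bound of $\mathcal C_m$ is $\preceq y$, so it lies in $I$ by Lemma~\ref{lem:frink-downset}, and hence in $I\cap\mathcal C_m$.

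If $I\cap\mathcal C_m=\emptyset$, the empty set must be treated separately. I would adopt the standard convention of \cite{fri,ern1} that the Frink condition is tested only on non-empty finite subsets, or equivalently that the trivial empty ideal is admitted. Under that convention the empty intersection is trivially a Frink ideal of $\mathcal C_m$.

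With this convention made explicit at the start of the proof, the two cases together complete the argument.
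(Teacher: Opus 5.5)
Your proof is correct and follows the paper's argument. The maximum $z_0$ of the finite set $Z$ gives $Z^{\delta_{\mathcal C_m}}={\downarrow}z_0\cap\mathcal C_m$, the down-set property from Lemma~\ref{lem:frink-downset} finishes the non-empty case, and the case $Z=\emptyset$ with $I\cap\mathcal C_m\neq\emptyset$ is handled exactly as in the paper.

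Your explicit convention for $I\cap\mathcal C_m=\emptyset$ is in fact necessary, not cosmetic, and the paper's proof silently omits this case. Under the paper's literal definition (all finite $Z$, including $\emptyset$) the lemma fails there. For example, take the two-element antichain $X=\{a,b\}$, the Frink ideal $I=\{b\}$ and the chain $\mathcal C_m=\{a\}$: then $\emptyset^{\delta_{\mathcal C_m}}=\{a\}\not\subseteq\emptyset=I\cap\mathcal C_m$.
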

\begin{proof}
Let $Z\subseteq I\cap\mathcal C_m$ be finite and non-empty; since $\mathcal C_m$ is totally ordered, $Z$ has a maximum element $z^*$. Within $\mathcal C_m$, $Z^{\uparrow_{\mathcal C_m}}=\{c\in\mathcal C_m:c\succeq z^*\}$, so $Z^{\delta_{\mathcal C_m}}={\downarrow_{\mathcal C_m}}z^*$. Since $z^*\in I$ and, by Lemma~\ref{lem:frink-downset}, $I$ is a down-set of $X$, ${\downarrow_X}z^*\subseteq I$, whence ${\downarrow_{\mathcal C_m}}z^*\subseteq I\cap\mathcal C_m$. The case $Z=\emptyset$ is immediate whenever $I\cap\mathcal{C}_m\neq\emptyset$, since then $\emptyset^{\delta_{\mathcal C_m}}$, if it exists, is a single minimal element of $\mathcal C_m$ already forced into $I\cap\mathcal{C}_m$ by the argument above applied to any $z^*\in I\cap\mathcal C_m$.
\end{proof}

\begin{definition}[Chain Cofinality]\label{def:chain-cofinal}
A chain decomposition $\{\mathcal C_m\}_{m\in M(X,\preceq)}$ of $X$ is \emph{cofinal} if, for every $m\in M(X,\preceq)$:
\begin{enumerate}[label=(\alph*)]
\item[(down)] for every down-set $J\subseteq X$ with $J\cap\mathcal C_m\neq\emptyset$, $J\setminus\mathcal C_m\subseteq{\downarrow}(J\cap\mathcal C_m)$;
\item[(up)] for every non-empty $S\subseteq\mathcal C_m$ and every $u\in X$ with $u\succeq s$ for all $s\in S$, there exists $c\in\mathcal C_m$ with $c\succeq s$ for all $s\in S$ and $c\preceq u$.
\end{enumerate}
\end{definition}

\begin{remark}
Each maximal chain $\mathcal C_m$ may be viewed as the intrinsic improvement path leading to a stable maximal element $m$. Condition \emph{(down)} states that the chain is downward informationally complete: every lower approximation relevant to the chain is already represented within the chain itself. Dually, condition \emph{(up)} states that every upper bound of a subset of the chain can be matched by an element of the chain lying below the same bound. Together, these conditions express that the approximation structure of a maximal chain is self-contained: all information needed to determine approximation from below and from above is encoded within the chain, without reference to elements outside it.
\end{remark}

\begin{proposition} \label{prop:precontinuity}
Let $(X, \preceq)$ be a partially ordered set (poset) and suppose that:
\begin{enumerate}[label=(\roman*)]
    \item The set of maximal elements $M(X, \preceq) \subseteq X$ is non-empty and stable with respect to $\preceq$.
    \item for each $m \in M(X, \preceq)$, there exists a maximal chain $\mathcal{C}_m \subseteq X$ that is totally ordered, upward-directed, and maximal with respect to set inclusion,
    \item $X = \bigcup_{m \in M(X, \preceq)} \mathcal{C}_m$, and
    \item the chain decomposition $\{\mathcal C_m\}_{m\in M(X,\preceq)}$ is cofinal (Definition~\ref{def:chain-cofinal}).
\end{enumerate}
Then, the poset $(X, \preceq)$ is precontinuous.
\end{proposition}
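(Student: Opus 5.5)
The plan is to reduce precontinuity of $(X,\preceq)$ to precontinuity of the individual chains $\mathcal C_m$, and then transfer it back to $X$ using the two halves of cofinality (Definition~\ref{def:chain-cofinal}). Fix $x\in X$. By (iii) choose $m\in M(X,\preceq)$ with $x\in\mathcal C_m$. Write $W_X=\{y\in X: y\ll_e x\}$ for the way-below set computed in $X$. Write $W_C$ for the analogous set computed inside the chain $\mathcal C_m$, with Frink ideals and $\delta$-closures taken relative to $\mathcal C_m$. The goal is $x\in W_X^\delta$, that is, $x\preceq u$ for every upper bound $u$ of $W_X$ in $X$.

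\emph{Step 1 (chains are precontinuous).} The Dedekind--MacNeille completion of a chain is a complete chain. A complete chain is a continuous lattice, since in a complete chain $a\ll b$ holds whenever $a\prec b$, and also for $a=b$ when $b$ is not a directed join of strictly smaller elements. By Lemma~\ref{pan6}, $\mathcal C_m$ is therefore precontinuous, so $x\in W_C^{\delta_{\mathcal C_m}}$.

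\emph{Step 2 ($W_C\subseteq W_X$).} Let $y\in W_C$ and let $I\in\mathrm{Fid}(X)$ with $x\in I^\delta$. By Lemma~\ref{lem:chain-frink}, $I\cap\mathcal C_m$ is a Frink ideal of $\mathcal C_m$. I will show $x\in(I\cap\mathcal C_m)^{\delta_{\mathcal C_m}}$. Take $c\in\mathcal C_m$ that bounds $I\cap\mathcal C_m$ from above. By Lemma~\ref{lem:frink-downset}, $I$ is a down-set, so condition (down) gives $I\setminus\mathcal C_m\subseteq{\downarrow}(I\cap\mathcal C_m)$. Hence $c$ bounds all of $I$ from above, and $x\in I^\delta$ yields $x\preceq c$. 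Since $y\ll_e x$ in $\mathcal C_m$, it follows that $y\in I\cap\mathcal C_m\subseteq I$. Thus $y\ll_e x$ in $X$. For (down) to apply we need $I\cap\mathcal C_m\neq\emptyset$. I expect this to follow from $x\in I^\delta$ with $I\neq\emptyset$ and an application of (down) to the down-set $I$. The case $I=\emptyset$ is handled directly: $\emptyset^\delta$ is the set of lower bounds of $X$, so $x$ is then the least element of $X$.

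\emph{Step 3 (upper bounds).} Let $u\in X$ be an upper bound of $W_X$. By Step 2, $u$ also bounds $W_C$ from above. If $W_C\neq\emptyset$, condition (up) gives $c\in\mathcal C_m$ with $c\succeq W_C$ and $c\preceq u$. Step 1 then gives $x\preceq c\preceq u$, so $x\in W_X^\delta$. If $W_C=\emptyset$, Step 1 says $x$ lies below every element of $\mathcal C_m$, so $x=\min\mathcal C_m$. By maximality of the chain and (down) applied to ${\downarrow}x$, $x$ is then minimal in $X$, and I would argue separately that $x$ is the bottom of the relevant component, so that $\emptyset^\delta$ contains $x$.

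I expect the main obstacle to be Step 2: the bookkeeping that compares Frink closures computed in $X$ with those computed in $\mathcal C_m$, together with the degenerate cases ($I\cap\mathcal C_m=\emptyset$ and $W_C=\emptyset$). The argument rests on (down) letting chain upper bounds serve as global upper bounds, and on (up) letting global upper bounds be replaced by chain elements. My first attempt at the degenerate cases would be to use maximality of $\mathcal C_m$ to force any minimal element of $X$ below $x$ into the chain.
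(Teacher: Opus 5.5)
Your plan is the paper's plan. You prove precontinuity of the chain $\mathcal C_m$ on its own, then use (down) with Lemmas~\ref{lem:frink-downset} and~\ref{lem:chain-frink} to transfer way-below relations from $\mathcal C_m$ to $X$, then use (up) to replace an upper bound in $X$ by one in $\mathcal C_m$. The one substantive difference is in your favour. The paper works with $D_x=\{a\in\mathcal C_m: a\prec x\}$ and asserts $x\in D_x^{\delta_{\mathcal C_m}}$, and later $x\in D_x^\delta$. Both assertions fail when $x$ covers some $a_0$ in $\mathcal C_m$: then $a_0$ is an upper bound of $D_x$ (in $\mathcal C_m$ and in $X$), but $x\not\preceq a_0$. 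This already happens for $X=\{0\prec 1\}$, $x=1$, which satisfies all the hypotheses. Your $W_C$, the full way-below set of $x$ in $\mathcal C_m$, contains $x$ itself in that case, so your Steps~1 and~3 are sound where the paper's are not. Deriving chain precontinuity from the MacNeille completion being a complete chain, rather than citing Ern\'e's Theorem~2, is an inessential variation.

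The loose end is Step~2's need for $I\cap\mathcal C_m\neq\emptyset$, which the paper passes over silently. Your suggested justification is circular: (down) applies only to down-sets that already meet $\mathcal C_m$, so you cannot apply it to $I$ to produce that intersection. Apply it instead to $J=I\cup{\downarrow}c$ for any $c\in\mathcal C_m$.
\begin{itemize}
\item If $I\cap\mathcal C_m=\emptyset$, then $I\subseteq J\setminus\mathcal C_m\subseteq{\downarrow}(J\cap\mathcal C_m)={\downarrow}c$.
\item So each element of $I$ lies below every element of $\mathcal C_m$, and maximality of $\mathcal C_m$ forces it into $\mathcal C_m$. Hence $I=\emptyset$.
\item The case $I=\emptyset$ is vacuous under Definition~\ref{pan2}, which allows $Z=\emptyset$. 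Indeed, $x\in\emptyset^\delta$ makes $x$ the least element of $X$, and then $\emptyset^\delta=\{x\}\not\subseteq\emptyset$, so $\emptyset$ is not a Frink ideal.
\end{itemize}
With the same convention your case $W_C=\emptyset$ never arises. In a chain every $a\prec x$ satisfies $a\ll_e^{\mathcal C_m}x$. Moreover $x\ll_e^{\mathcal C_m}x$ whenever $x$ is least in $\mathcal C_m$ or covers an element of it.

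Finally, the same device with $J={\downarrow}z\cup{\downarrow}c$, for $z\notin\mathcal C_m$, yields $z\preceq c$ for every $c\in\mathcal C_m$, contradicting maximality of $\mathcal C_m$. So (down) in fact forces $X=\mathcal C_m$, and the proposition collapses to your Step~1.
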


\begin{proof}
Fix $x \in X$. By condition (iii), there exists $m \in M(X, \preceq)$ such that $x \in \mathcal{C}_m$; by (ii), $\mathcal C_m$ is a chain. Let $D_x:=\{a\in\mathcal C_m:a\prec x\}$.

\emph{Step 1 (chain-local precontinuity, Ern\'e).} Since $\mathcal{C}_m$ is a chain, $\Downarrow y=\downarrow y$ if $y$ is the least element of $\mathcal C_m$ or covers another element, and $\Downarrow y=\downarrow y\setminus\{y\}$ otherwise \cite{ern1}; consequently every chain satisfies the conditions of \cite[Theorem~2]{ern1}, i.e.\ $\mathcal C_m$, viewed as a poset in its own right, is precontinuous with idempotent way-below relation. In particular, writing $\ll_e^{\mathcal C_m}$ and $(\cdot)^{\delta_{\mathcal C_m}}$ for the way-below relation and Frink closure computed using only $\mathcal C_m$'s own order: every $a\in D_x$ satisfies $a\ll_e^{\mathcal C_m}x$, and $x\in D_x^{\delta_{\mathcal C_m}}$.

\emph{Step 2 (down-cofinality gives $D_x\subseteq\{a:a\ll_ex\}$).} Fix $a\in D_x$ and a Frink ideal $I$ of $X$ with $x\in I_\delta$; we show $a\in I$. By Lemma~\ref{lem:frink-downset}, $I$ is a down-set, so by condition (iv)(down), $I\setminus\mathcal C_m\subseteq{\downarrow}(I\cap\mathcal C_m)$. Hence any $u\in\mathcal C_m$ with $u\succeq(I\cap\mathcal C_m)$ also satisfies $u\succeq I$ by transitivity, i.e.\ $I^\uparrow\cap\mathcal C_m=(I\cap\mathcal C_m)^{\uparrow_{\mathcal C_m}}$. Since $x\in I_\delta$, $x\preceq u$ for every $u\in I^\uparrow\cap\mathcal C_m=(I\cap\mathcal C_m)^{\uparrow_{\mathcal C_m}}$, i.e.\ $x\in(I\cap\mathcal C_m)^{\delta_{\mathcal C_m}}$. By Lemma~\ref{lem:chain-frink}, $I\cap\mathcal C_m$ is a Frink ideal of $\mathcal C_m$; by Step 1, $a\ll_e^{\mathcal C_m}x$, so $a\in I\cap\mathcal C_m\subseteq I$. Hence $a\ll_ex$.

\emph{Step 3 (up-cofinality gives $x\in D_x^\delta$).} Let $u\in D_x^\uparrow$ (a common upper bound of $D_x$ in $X$). If $u\in\mathcal C_m$, then $u\in D_x^{\uparrow_{\mathcal C_m}}$ and Step 1 gives $x\preceq u$ directly. If $u\notin\mathcal C_m$, condition (iv)(up) provides $c\in\mathcal C_m$ with $c\succeq D_x$ and $c\preceq u$; Step 1 gives $x\preceq c$, and transitivity gives $x\preceq u$. As $u$ was arbitrary, $x\in D_x^\delta$.

\emph{Step 4 (combine).} By Step 2, $D_x\subseteq\{a\in X:a\ll_ex\}$; since $(\cdot)_\delta$ is monotone, $D_x^\delta\subseteq(\{a:a\ll_ex\})_\delta$. Combined with Step 3, $x\in(\{a:a\ll_ex\})_\delta$. As $x\in X$ was arbitrary, $(X,\preceq)$ is precontinuous.
\end{proof}

\begin{remark}
As established, the stability of the maximal set ensures a canonical decomposition into maximal chains that supports precontinuity intrinsically within the given poset, independent of external ordering extensions. This distinguishes our framework from those relying purely on Zorn's lemma for the existence of order-theoretic structures, providing a stronger, order-preserving foundation for the existence of stable sets in infinite domains.
\end{remark}

\begin{definition}[Upper MacNeille Informational Monotonicity]
Let
\[
\mathcal{P}_R=(X,\preceq_R),
\]
and let
\[
\delta(\mathcal{P}_R)
\]
be its Dedekind--MacNeille completion.

We say that the binary relation \(R\) satisfies
\emph{Upper MacNeille Informational Monotonicity} if, for every
\(x,y,z\in X\) and every \(u\in\delta(\mathcal{P}_R)\),
\[
xP(R)z,\qquad
u\ll z,\qquad
u\preceq y
\Longrightarrow
xP(R)y,
\]
where \(\ll\) and \(\preceq\) denote the way-below relation and the order of
\(\delta(\mathcal{P}_R)\), respectively.
\end{definition}

\begin{remark}
From the perspective of economic theory, Upper MacNeille Informational Monotonicity formalizes robustness of strict preference under coarse information. Suppose that the decision maker has established the strict preference $xP(R)z$ using only a coarse approximation $u\ll z$ of the alternative $z$. If another alternative $y$ contains at least the same informational content, in the sense that $u\preceq y$, then the strict preference is preserved, namely $xP(R)y$. Thus, the comparison does not depend on the complete description of an alternative, but only on the minimal information sufficient to support the preference. In this sense, Upper MacNeille Informational Monotonicity models preference judgments that are stable under refinement of available information.
\end{remark}

\begin{proposition}\label{prop:Lx-lawson-open}
Suppose $(X,R)$ is a consistent abstract system such that $\mathcal P_R=(X,\preceq_R)$ is precontinuous, and suppose $R$ satisfies Upper MacNeille Informational Monotonicity. Then, identifying $X$ with $i(X)\subseteq\delta(\mathcal P_R)$ (Lemma~\ref{lem:identification}), for every $x\in X$,
\[
L_x=\{y\in X:xP(R)y\}
\]
is $\tau_{_L}$-open, where $\tau_{_L}=\lambda(\delta(\mathcal P_R))\big|_{i(X)}$.
\end{proposition}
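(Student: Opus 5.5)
The plan is to show that every point of \(L_x\) has a Lawson-open neighbourhood, in fact a Scott-open one, contained in \(L_x\). By Lemma~\ref{pan6}, \(\delta(\mathcal P_R)\) is a continuous lattice, and by Lemma~\ref{lem:identification} the way-below relation \(\ll\), the Scott topology and the Lawson topology are all taken on \(\delta(\mathcal P_R)\). For every \(u\in\delta(\mathcal P_R)\) the set
\[
{\Uparrow}u=\{v\in\delta(\mathcal P_R): u\ll v\}
\]
is Scott-open; this is the standard fact for continuous lattices and rests on the interpolation property of \(\ll\). Since \(\sigma\subseteq\lambda\), the set \({\Uparrow}u\) is Lawson-open, and so its trace \({\Uparrow}u\cap i(X)\) is \(\tau_{_L}\)-open in the subspace \(i(X)\).

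Fix \(x\in X\) and \(z\in L_x\), so that \(xP(R)z\). Because \(\delta(\mathcal P_R)\) is continuous, \(z=\bigvee\{u: u\ll z\}\) and this set is directed. The key step is to choose some \(u\in\delta(\mathcal P_R)\) with \(u\ll z\) and \(z\in{\Uparrow}u\). To get this, I would first apply interpolation to obtain any \(u_0\ll z\), and then interpolate \(u_0\ll w\ll z\). For the neighbourhood only the elements \(u\) with \(u\ll z\) matter, and if \(u\ll z\) then \(z\in{\Uparrow}u\) holds trivially. So I take \(u\) to be any element way below \(z\); for instance \(u=\bot\), the least element of the complete lattice \(\delta(\mathcal P_R)\), always satisfies \(\bot\ll z\).

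Put \(W={\Uparrow}u\cap i(X)\), which is a \(\tau_{_L}\)-open neighbourhood of \(z\). Now take \(y\in W\). Then \(u\ll y\), and hence \(u\preceq y\). Upper MacNeille Informational Monotonicity, applied to \(xP(R)z\), \(u\ll z\) and \(u\preceq y\), gives \(xP(R)y\), that is, \(y\in L_x\). Thus \(z\in W\subseteq L_x\). Since \(z\in L_x\) was arbitrary, \(L_x=\bigcup_{z\in L_x}W_z\) is \(\tau_{_L}\)-open.

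The main obstacle is making sure the chosen \(u\) is legitimate and that \({\Uparrow}u\) really is Scott-open in \(\delta(\mathcal P_R)\) and not merely in \(X\). I will state openness in the completion and only then restrict to \(i(X)\), so that the identification in Lemma~\ref{lem:identification} is used consistently. Note that the axiom as stated is strong enough that the choice \(u=\bot\) already forces \(L_x\) to be either empty or all of \(X\). The argument therefore goes through, but it uses the axiom only in this trivializing way. I would still present it with a general \(u\ll z\), so that the proof does not depend on that degenerate feature.
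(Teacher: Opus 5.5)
Your argument is correct and is essentially the paper's proof: pick any $u\ll z$ in $\delta(\mathcal P_R)$, use that ${\Uparrow}u$ is Scott-open (hence Lawson-open) in the completion, restrict it to $i(X)$, and show every $y$ in that set lies in $L_x$ via $u\ll y\Rightarrow u\preceq y$ and Upper MacNeille Informational Monotonicity (the interpolation detour is unnecessary, since $\bot\ll z$ always holds). Your closing observation is also right and can be sharpened: taking $u=\bot$ and $y=x$ in the axiom gives $xP(R)x$ whenever $L_x\neq\emptyset$, which is impossible, so the axiom actually forces $P(R)=\emptyset$ and every $L_x$ is empty.
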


\begin{proof}
Fix $x\in X$ and let $z\in L_x$, so $xP(R)z$.

\emph{Step 1.} By Lemma~\ref{pan6}, $\delta(\mathcal P_R)$ is a continuous lattice.

\emph{Step 2.} By the definition of continuous lattice, $i(z)=\sup\{u\in\delta(\mathcal P_R):u\ll i(z)\}$; fix $u\ll z$ (i.e.\ $u\ll i(z)$).

\emph{Step 3.} Let $W:=\Uparrow u=\{w\in\delta(\mathcal P_R):u\ll w\}$. This is Scott-open in $\delta(\mathcal P_R)$ (standard fact for continuous lattices), hence Lawson-open, since $\lambda=\sigma\vee\omega\supseteq\sigma$.

\emph{Step 4.} $i(z)\in W$, since $u\ll i(z)$ is exactly membership in $\Uparrow u$. Hence $W\cap i(X)$ is a $\tau_{_L}$-open neighbourhood of $i(z)$ (definition of the subspace topology).

\emph{Step 5.} For $i(y)\in W\cap i(X)$: $u\ll i(y)$ (definition of $W$), so $u\preceq i(y)$ (standard: $u\ll v\Rightarrow u\preceq v$). Combined with $xP(R)z$ and $u\ll z$, Upper MacNeille Informational Monotonicity gives $xP(R)y$, i.e.\ $y\in L_x$.

\emph{Step 6.} By Steps 4--5, $z$ has a $\tau_{_L}$-open neighbourhood contained in $L_x$. As $z\in L_x$ was arbitrary, $L_x$ is $\tau_{_L}$-open. As $x\in X$ was arbitrary, this holds for every $x\in X$.
\end{proof}

\begin{remark}
The dual statement --- that $U_x=\{y\in X:yP(R)x\}$ is $\tau_{_L}$-open for every $x\in X$ --- remains open. It cannot be obtained by the same argument: $U_x$ is contained in a down-set of $\mathcal P_R$, and Scott-open sets are always up-sets, so no analogue of Step~3 can produce a Scott-open piece landing inside $U_x$. Combining a Scott-open piece with the (unconditionally Lawson-open) set $X\setminus{\uparrow_{\preceq_R}}x$ was attempted, using the candidate hypothesis
\[
zP(R)x,\quad u\ll z,\quad u\preceq y,\quad x\not\preceq_R y\ \Longrightarrow\ yP(R)x,
\]
but this hypothesis fails on the antichain example $X=\{\ast\}\cup[0,1]$, $R=\{(\ast,t):t\in[0,1]\}$: taking $x=t_0\in[0,1]$, $z=\ast$, $u=i(\ast)$ (the bottom of $\delta(\mathcal P_R)$, which is way-below every element), the hypothesis's antecedent holds for every $y=t_1\in[0,1]\setminus\{t_0\}$, forcing $t_1P(R)t_0$ --- false, since points of $[0,1]$ are pairwise unrelated in this example. The openness of $U_x$ is left as an open problem.
\end{remark}

\begin{theorem}
Let $(X,R)$ be a consistent abstract decision problem such that $R$ satisfies Upper MacNeille Informational Monotonicity and the maximal-chain decomposition of $(X,\preceq_R)$ (Proposition~\ref{prop:precontinuity}) is cofinal (Definition~\ref{def:chain-cofinal}).
Then, the following statements are equivalent:
\begin{enumerate}[label=(\alph*)]
\item the set of \(R\)-maximal elements \(\mathcal{M}(X,R)\) is non-empty and stable;
\item there exists a compact topology \(\tau\) on \(X\) such that \(R\) is Nachbin closed and upper semicontinuous.
\end{enumerate}
\end{theorem}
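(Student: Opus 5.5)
We prove the two implications separately. The direction (b)$\Rightarrow$(a) mostly uses Lemma~\ref{a221-early}. The direction (a)$\Rightarrow$(b) builds the topology from the Lawson topology of the Dedekind--MacNeille completion of $\mathcal P_R$.

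\emph{(b)$\Rightarrow$(a).} Let $\tau$ be compact with $R$ Nachbin closed and upper semicontinuous. Lemma~\ref{a221-early} gives $\mu(\Xi,\widetilde R)\neq\emptyset$. Since $R$ is consistent, Proposition~\ref{prop:consistent-order} makes $\overline{P(R)}$ asymmetric, so every strong component is a singleton. Each $\widetilde R$-maximal component $\{m\}$ is then an $R$-maximal element, hence $\mathcal M(X,R)\neq\emptyset$.

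Internal stability is immediate: no maximal element is $P(R)$-dominated at all.

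For external stability, fix $y\notin\mathcal M(X,R)$. We rerun the argument of Lemma~\ref{a221-early} on the set $A_y=\{t:\emptyset\subset\overline{P(R)}t\subseteq\overline{P(R)}y\}\cup\{t: t\,\overline{P(R)}\,y\}$. This set is closed, by the same net argument using upper semicontinuity together with Nachbin closedness of $\preceq_R$. Compactness then yields a $\preceq_R$-maximal element $m\succ_R y$, and $m$ is $R$-maximal. The remaining issue is to pass from $m\,\overline{P(R)}\,y$ to a single step $m\,P(R)\,y$. Take a chain $m=x_0P(R)x_1\cdots P(R)x_k=y$. The upper contour set $\{t: t\,P(R)\,y\}$ is nonempty, and we want it to contain a maximal element. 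Since $y$ lies in the open set $A_{x_{k-1}}$, we compactness-maximize over the closed set $\{t:t\,P(R)\,y\}^{-}$ and use Nachbin closedness to keep the limit in the relation. I expect this step, obtaining a one-step $P(R)$ domination rather than a transitive one, to be the main obstacle. If the closure argument fails, I would fall back on extracting a $\preceq_R$-maximal element inside the $\tau$-closure of $\{t:tP(R)y\}$.

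\emph{(a)$\Rightarrow$(b).} Assume $\mathcal M(X,R)$ is nonempty and stable. By Proposition~\ref{pan1-early}, it is a stable set of maximal elements in $(X,\preceq_R)$. The chain-decomposition proposition then yields maximal chains $\mathcal C_m$ covering $X$, and by hypothesis this decomposition is cofinal. Proposition~\ref{prop:precontinuity} gives that $\mathcal P_R$ is precontinuous, so $\delta(\mathcal P_R)$ is a continuous lattice by Lemma~\ref{pan6}.

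Let $\tau=\tau_L=\lambda(\delta(\mathcal P_R))|_{i(X)}$. Nachbin closedness follows from Proposition~\ref{prop:macneille-nachbin}. Upper semicontinuity is exactly Proposition~\ref{prop:Lx-lawson-open}, since $L_x=A_x$.

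Compactness is delicate. $\lambda(\delta(\mathcal P_R))$ is compact by Lemma~\ref{law}, but the subspace $i(X)$ need not be closed. I would argue directly with an open cover $\{O_j\}$ of $X$ in $\tau_L$:
\begin{itemize}
\item each $m\in\mathcal M(X,R)$ lies in some $O_{j}$;
\item by external stability, every other point is $P(R)$-dominated by some maximal element.
\end{itemize}
Using the Scott-open sets $\Uparrow u$ with $u\ll m$, extend the cover to one of $\delta(\mathcal P_R)$. The points of the completion outside $i(X)$ lie below suprema of chains in the $\mathcal C_m$, and chain cofinality places them in the closure of $i(X)$ within the relevant chain. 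Then take a finite subcover in $\delta(\mathcal P_R)$ and restrict it to $X$.

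If this fails, the alternative is to coarsen $\tau$ to the topology generated by the sets $L_x$ and the complements $X\setminus\uparrow x$. Compactness can then be checked via Alexander's subbase lemma, using that every point is dominated by a maximal element. Nachbin closedness would still come from the lower-topology sets, but upper semicontinuity would hold by definition of the generated topology.
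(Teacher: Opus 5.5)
\textbf{The step you flag in (b)$\Rightarrow$(a) is a genuine gap, and neither of your remedies can close it.} Your compactness argument gives at best some $m\in\mathcal M(X,R)$ with $m\,\overline{P(R)}\,y$. Nachbin closedness is a condition on the graph of $\preceq_R=\overline{P(R)}\cup\Delta$, so limits of $P(R)$-pairs stay only in $\overline{P(R)}\cup\Delta$, never in $P(R)$.

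In fact no argument that uses only consistency, compactness, Nachbin closedness and upper semicontinuity can work. Take $X=\{a,b,c\}$ with the discrete topology and $R=\{(a,b),(b,c)\}$. All four properties hold and $\mathcal M(X,R)=\{a\}$, yet $(a,c)\notin P(R)$. Your fallback (an undominated point of the closure of $\{t: tP(R)c\}=\{b\}$) just returns $b$.

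The paper argues this step differently. It shows that the set $A$ of non-maximal points dominated by no maximal point is closed, covers $A$ by the sets $\{t: sP(R)t\}$ with $s\in A$, and extracts a finite $P(R)$-cycle. But that argument needs $s_t\in A$, which it obtains from a ``standing assumption'' strictly stronger than the negation of external stability. In the example, $A=\{c\}$ while $s_c=b\notin A$. So this direction has to use a hypothesis you never invoke.

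\textbf{In (a)$\Rightarrow$(b) you follow the paper's route,} namely the Lawson topology of $\delta(\mathcal P_R)$ restricted to $i(X)$, with Propositions~\ref{prop:macneille-nachbin} and~\ref{prop:Lx-lawson-open}. You are right that Lemma~\ref{law} gives compactness of $\delta(\mathcal P_R)$, not of $i(X)$; the paper passes over this. Your repair points the wrong way, though. In the compact Hausdorff space $\lambda(\delta(\mathcal P_R))$, the subspace $i(X)$ is compact iff it is closed. So a point of $\delta(\mathcal P_R)\setminus i(X)$ lying in the closure of $i(X)$, as your sketch asserts, would show that $\tau_L$ is \emph{not} compact. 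Your fallback topology also leaves Nachbin closedness unjustified. That property needs an increasing open neighbourhood disjoint from a decreasing one; the lower-topology sets are only decreasing, and the $L_x$ are not monotone in general.

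\textbf{The missing idea, which settles both directions at once,} is Upper MacNeille Informational Monotonicity applied with $u=\bot$, the least element of the complete lattice $\delta(\mathcal P_R)$. Since $\bot\ll i(z)$ and $\bot\preceq i(y)$ for all $y,z\in X$, any pair $xP(R)z$ would force $xP(R)y$ for every $y$. In particular it would force $xP(R)x$, which is impossible, so $P(R)=\emptyset$. Then $\mathcal M(X,R)=X$, so (a) holds trivially. Also (b) holds for any compact Hausdorff topology on $X$: discrete if $X$ is finite, otherwise a one-point compactification of a discrete space of cardinality $|X|$. This works because $\preceq_R=\Delta$ is then closed and every set $\{x: yP(R)x\}$ is empty.
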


\begin{proof}
\noindent (a) $\Rightarrow$ (b) Assume that the set of \(R\)-maximal elements \(\mathcal{M}(X,R)\) is non-empty and stable. By Proposition \ref{pan1-early},
\(\mathcal{M}(X,R)\) is a stable set in the poset \((X,\preceq_{_R})\).

By Proposition \ref{prop:precontinuity}, the induced poset
\[
\mathcal{P}_{_R}=(X,\preceq_{_R})
\]
is precontinuous. Hence, by Lemma \ref{pan6}, its MacNeille completion is a continuous lattice. In particular, the order-theoretic structure induced by \(\preceq_{_R}\) admits the compact Lawson topology on \(X\).

By Lemma \ref{law}, \((X,\tau)\) is compact. Moreover, Proposition \ref{pan9} shows that the graph of \(\preceq_{_R}\) is closed in \((X,\tau)\times (X,\tau)\). Equivalently, \(R\) is Nachbin closed with respect to \(\tau\).

Finally, since \(R\) is consistent and satisfies Upper MacNeille Informational Monotonicity, Proposition~\ref{prop:Lx-lawson-open} shows that for every \(t\in X\) the set
\[
\{x\in X\mid tP(R)x\}
\]
is \(\tau_{_L}\)-open. Therefore \(R\) is upper semicontinuous with respect to \(\tau_{_L}\).

Thus there exists a compact topology \(\tau=\tau_{_L}\) on \(X\) such that \(R\) is Nachbin closed and upper semicontinuous.

\smallskip

\noindent (b) $\Rightarrow$ (a) Let $(X,\tau)$ be a compact topological space, and let $R$ be an upper semicontinuous binary relation that is Nachbin closed. Since $R$ is consistent, Lemma \ref{a221-early} implies that $\mathcal{M}(X,R)\neq \emptyset$. Moreover, for each $x,y\in \mathcal{M}(X,R)$ we have $(x,y)\notin P(R)$. Hence the internal stability condition is satisfied.

To prove external stability of a stable set, suppose to the contrary that for each $x\in X\setminus \mathcal{M}(X,R)$ we have that $(m,x)\notin R$ whenever $m\in \mathcal{M}(X,R)$. Let
\[
A=\{y\in X\setminus \mathcal{M}(X,R)\mid (m,y)\notin P(R) \text{ for all } m\in \mathcal{M}(X,R)\}.
\]

We prove that $A=\emptyset$ and hence that $\mathcal{M}(X,R)$ is externally stable. We first show that $A$ is a closed subset of $(X,\tau)$. Suppose that $t$ belongs to the closure of $A$. Then there exists a net $(t_\beta)_{\beta\in B}$ in $A$ with $t_\beta \to t$. Since $t_\beta \in X\setminus \mathcal{M}(X,R)$, for each $\beta\in B$ there exists $s_\beta \in X$ such that $(s_\beta,t_\beta)\in P(R)\subseteq \preceq_R$. By compactness of $X$, we may assume, after passing to a subnet if necessary, that $s_\beta \to s\in X$.
Now, since $t_\beta \to t$, $s_\beta \to s$, and $(s_\beta,t_\beta)\in \preceq_R$ for all $\beta$, the Nachbin closedness of $R$ implies that $(s,t)\in \preceq_R$.
By consistency, we have that $\overline{P(R)}$ is asymmetric and $s \neq t$.
By the definition
\[
\preceq_R = P(\overline{P(R)}) \cup \Delta
\]
and $(s,t)\in \preceq_R$, we have that there exists a finite sequence $s=s_0,s_1,\dots,s_n=t$ such that $s_i\,P(R)\,s_{i+1}$ for all $i\in\{0,\dots,n-1\}$ and $s_n=t$. Then, from $s_{n-1}\,P(R)\,t$ we have that $t\in X\setminus \mathcal{M}(X,R)$.

We now show that $t\in A$. Suppose to the contrary that $t\notin A$. Then $(m^*,t)\in P(R)$ for some $m^*\in \mathcal{M}(X,R)$. Since $R$ is upper semicontinuous, the set
\[
\{x\in X\mid m^*P(R)x\}
\]
is open. Hence there exists $\beta'\in B$ such that for every $\beta\geq \beta'$ we have $(m^*,t_\beta)\in P(R)$, contradicting the fact that $t_\beta\in A$ for all $\beta$.
This contradiction shows that $t\in A$, and hence $A$ is a closed subset of $(X,\tau)$.

It follows that $A$ is compact. Fix $t\in A\subseteq X\setminus\mathcal{M}(X,R)$. Since $t\notin\mathcal M(X,R)$, by definition of $\mathcal M(X,R)$ there exists $s_t\in X$ with $s_tP(R)t$. If $s_t\in\mathcal M(X,R)$, then $s_t$ would be a maximal element with $(s_t,t)\in P(R)$, contradicting the standing assumption (towards a contradiction) that $(m,t)\notin R$ for every $m\in\mathcal M(X,R)$; hence $s_t\notin\mathcal M(X,R)$. Applying the same standing assumption to $s_t$ in place of $t$ shows $(m,s_t)\notin P(R)$ for every $m\in\mathcal M(X,R)$, i.e.\ $s_t\in A$.

Therefore, for each $t\in A$, the witness $s_t\in A$ satisfies $s_tP(R)t$, i.e.\ $t\in\{x\in X\mid s_tP(R)x\}$; since $R$ is upper semicontinuous, this set is open. Thus the collection $(\{x\in X\mid sP(R)x\}\cap A)_{s\in A}$ is an
open cover of $A$, that is,
\begin{center}
$A=\displaystyle\bigcup_{s\in A}(\{t\in X\vert \ sP(R)t\}\cap A)$.
\end{center}
Since $A$ is compact,
there exist $s_{_1},s_{_2},...,s_{_n}\in X$ such that
\begin{center}
$A=\displaystyle\bigcup_{i=1,2,...,n}
(\{t\in X\vert \ s_{_i}P(R)t\}\cap A)$.
\end{center}

We show that among the elements $s_{_1},s_{_2},...,s_{_n}$ there must be a $P(R)$-cycle.
First note that if $i^{\ast}\in \{1,2,...,n\}$, then $s_{_{i^{\ast}}}$ is an element of one of the covering sets
\[
\{t\in X\vert \ s_{_i}P(R)t\}\cap A, \qquad i=1,2,\dots,n.
\]
If
\[
s_{_{i^{\ast}}}\in \{t\in X\vert \ s_{_{i^{\ast}}}P(R)t\}\cap A,
\]
then we would immediately obtain a $P(R)$-cycle. Otherwise, for each $i\in\{1,2,\dots,n\}$ there exists $j\in\{1,2,\dots,n\}$, $j\neq i$, such that
\[
s_i\in \{t\in X\vert \ s_jP(R)t\}\cap A.
\]
Without loss of generality, assume that
\[
s_{_2}P(R)s_{_1}.
\]
Applying the same argument successively, we construct a finite sequence drawn from the set
\(\{s_{_1},s_{_2},\dots,s_{_n}\}\) in which each term strictly dominates the preceding one. Since this set is finite, some element must repeat, and therefore we obtain a finite $P(R)$-cycle. This contradicts the consistency of $R$.

Therefore, $A=\emptyset$, which proves the external stability of $\mathcal{M}(X,R)$. Consequently, \(\mathcal{M}(X,R)\) is non-empty and stable, completing the proof.
\end{proof}

\section*{Acknowledgement}

\section*{Declarations}

{\bf Conflict of interest}\ The authors declare that there is no conflict of interest.

\par\bigskip\smallskip\par\noindent

\par\noindent
{\it Address}: {\tt {Athanasios Andrikopoulos} \\ {Department of Computer Engineering \& Informatics\\ University of Patras\\ Greece}}
\par\noindent
{\it E-mail address}:{\tt aandriko@ceid.upatras.gr}

\par\noindent
{\it Address}: {\tt {Nikolaos Sampanis} \\ {Department of Computer Engineering \& Informatics\\ University of Patras\\ Greece}}
\par\noindent
{\it E-mail address}:{\tt nsampanis@upatras.gr}

\end{document}